\newtheorem{theorem}{Theorem}
\newtheorem{corollary}[theorem]{Corollary}
\newtheorem{lemma}{Lemma}
\newtheorem{claim}{Claim}
\newtheorem{problem}{Problem}
\newtheorem{proposition}{Proposition}
\newtheorem{remark}{Remark}
\newtheorem{assumption}{Assumption}
\newtheorem{conjecture}{Conjecture}
\newtheorem{hypothesis}{Hypothesis}
\newtheorem{observation}{Observation}
\newtheorem*{definition*}{Definition}
\NewDocumentCommand{\lplabel}{o m}{%
  \makebox[0pt][r]{#2\hspace*{4em}}%
  \IfNoValueF{#1}
    {\def\@currentlabel{#2}\ltx@label{#1}}
}
\renewcommand\section{%
  \@startsection{section}{1}
                {\z@}%
                {-3.5ex \@plus -1ex \@minus -.2ex}%
                {2.3ex \@plus.2ex}%
                {\large\bfseries}
}
\renewcommand\subsection{%
  \@startsection{subsection}{2}
                {\z@}%
                {-3.25ex\@plus -1ex \@minus -.2ex}%
                {1sp}
                {\normalsize\bfseries}
}
\renewcommand\subsubsection{%
  \@startsection{subsubsection}{3}
                {\z@}%
                {-3.25ex\@plus -1ex \@minus -.2ex}%
                {1sp}
                {\normalfont\normalsize}
}
\title{{\LARGE\bf  Population Monotonic Allocation Schemes \\for Vertex Cover Games}\thanks{This work was supported in part by National Natural Science Foundation of China (11871442, 11971447) and Fundamental Research Funds for the Central Universities (201713051, 201964006).}}
\author[1]{Han Xiao\thanks{Corresponding author. Email: hxiao@ouc.edu.cn.}}
\author[1]{Qizhi Fang}
\author[2]{Ding-Zhu Du}
\affil[1]{\small School of Mathematical Sciences, Ocean University of China, Qingdao, China}
\affil[2]{\small Department of Computer Science, University of Texas at Dallas, Richardson, TX 75080, USA}
\date{}
\begin{document}


\maketitle

\openup 1.2\jot


\begin{abstract}
For vertex cover games (introduced by Deng et al., Math. Oper. Res., 24:751-766, 1999 \cite{DengIbar99}),
we investigate population monotonic allocation schemes (introduced by Sprumont, Games Econ. Behav., 2: 378-394, 1990 \cite{Spru90}).
We show that the existence of a population monotonic allocation scheme (PMAS for short) for vertex cover games can be determined efficiently and that a PMAS, if exists, can be constructed accordingly.
We also show that integral PMAS-es for vertex cover games can be characterized with stable matchings and be enumerated by employing Gale-Shapley algorithm (introduced by Gale and Shapley, Amer. Math. Monthly, 69:9-15, 1962 \cite{GaleShap62}).

\hfill

\noindent\textbf{Keywords:} population monotonic allocation scheme, cross-monotonic cost-sharing scheme, vertex cover, stable matching.

\noindent\textbf{Mathematics Subject Classification:}  05C57, 91A12, 91A43, 91A46.
\end{abstract}

\section{Introduction}
Cooperative game theory lays out a theoretical framework for analyzing cooperation among independent participants.
An essential issue in a cooperative game is to find an adequate allocation to distribute the expected cost of the coalition to individual participants.
There are many criteria for evaluating how ``good'' an allocation is, such as fairness, stability, and so on.
Emphases on different criteria lead to different solution concepts, e.g., the core, the Shapley value, the nucleolus, the bargaining set, and the von Neumann-Morgenstern solution.
Among those solution concepts, the core which addresses the issue of stability is one of the most attractive solution concepts.

The core in a cooperative game is the set of allocations for the grand coalition (i.e., the coalition of all participants),
under which no participant can derive a better payoff by leaving the grand coalition,
either individually or as a subgroup.
However, an allocation that lies in the core does not necessarily guarantee the unhindered formation of a coalition,
as the cost allocated to participants in the current coalition may increase when a new participant joins in. 

To study allocations in an expanding coalition,
population monotonic allocation schemes (also known as cross-monotonic cost-sharing schemes) were introduced,
under which no participant of any coalition derives a worse payoff after a new participant joins in.
A PMAS gives no incentive to any participant to block the expansion of coalition and hence the grand coalition is always achieved.
Besides, PMAS-es shift the attention from allocations only for the grand coalition to allocation schemes,
which deal with partial cooperation and provide allocations for any coalition.
Moreover, the set of allocations for the grand coalition that can be reached through a PMAS can be seen as a refinement of the core:
the core provides allocations in a sense of static stability, while PMAS-es provide allocations in a sense of dynamic stability.

Populations monotonic allocation schemes were first studied by Sprumont \cite{Spru90}, where some characterizations on PMAS-es were provided.
In particular, Sprumont proved that submodularity is sufficient for existence of a PMAS.
Grahn and Voorneveld \cite{GrahVoor02} showed that every bankruptcy game admits a PMAS by indicating bankruptcy rules that give rise to a PMAS.
Norde et al. \cite{NordMore04} presented a combinatorial algorithm for computing PMAS-es in minimum cost spanning tree games.
Hamers et al. \cite{HameMiqu14} characterized the class of coloring games admitting a PMAS and provided an algorithm enumerating all integral PMAS-es.
Motivated by the work of Hamers et al. \cite{HameMiqu14}, we investigate PMAS-es for vertex cover games by generalizing the characterization of submodular vertex cover games \cite{Okam03}.
Our results are in the same spirit as the work of Hamers et al. \cite{HameMiqu14}, and are also inspired by the work of Chen et al. \cite{ChenGao19}.
We provide an efficient characterization for the class of vertex cover games admitting a PMAS and show that integral PMAS-es can be characterized with stable matchings and be enumerated by employing Gale-Shapley algorithm.

Vertex cover games studied in this paper fall into the scope of combinatorial optimization games \cite{DengIbar99,DengIbar00}, which arise from cost allocations in the minimum vertex cover problem.
Vertex cover games were first studied by Deng et al. \cite{DengIbar99},
where the algorithmic aspect of the core was investigated and a complete characterization for the balancedness of vertex cover games was presented.
In a following work \cite{DengIbar00}, Deng et al. gave a necessary and sufficient condition for the total balancedness of vertex cover games.
As opposed to the model of Deng et al. \cite{DengIbar99, DengIbar00} where players are edges,
Gusev \cite{Guse19} introduced a different class of vertex cover games where players are vertices, and investigated the application to transport networks.
In this paper, we stick to the vertex cover game introduced by Deng et al. \cite{DengIbar99} and investigate PMAS-es.

The rest of this paper is organized as follows.
Section \ref{sec:prel} is a preliminary section introducing the relevant concepts of game theory and graph theory.
In Section \ref{sec:structure}, an efficient characterization for the class of vertex cover games admitting a PMAS is presented.
Section \ref{sec:dualdescription} offers a dual-based description of PMAS-es for vertex cover games.
In Section \ref{sec:integralPMAS}, we characterize and enumerate integral PMAS-es for vertex cover games with stable matchings.
Section \ref{sec:ending} concludes the results in this paper and addresses some complexity issues in computing PMAS-es.

\section{Preliminaries}
\label{sec:prel}
This section first reviews some concepts from game theory and graph theory,
and then introduces the definition and some known results for vertex cover games.

\subsection{Cooperative game theory}
A \emph{cooperative game} is a tuple $\Gamma=(N,\gamma)$, where $N$ is the set of players and $\gamma:2^N\rightarrow \mathbb{R}$ is the characteristic function with the convention $\gamma(\emptyset)=0$.
Any subset $S$ of $N$ is called a \emph{coalition}, where $N$ is called the \emph{grand coalition}.
For coalition $S$, $\gamma(S)$ represents the total cost charged to $S$.
A cooperative game $\Gamma=(N,\gamma)$ is said \emph{monotonic} if $\gamma(S)\leq \gamma(T)$ for any $S,T\in 2^N$ with $S\subseteq T$.
A cooperative game $\Gamma=(N,\gamma)$ is said \emph{submodular} if the characteristic function $\gamma$ is submodular, i.e., $\gamma (S)+\gamma (T)\geq \gamma (S\cup T)+\gamma (S\cap T)$ for any $S,T\in 2^N$.
The \emph{subgame} of $\Gamma$ corresponding to coalition $T$, denoted by $\Gamma_T$, is a game $(T,\gamma_T)$ with $\gamma_T (S)=\gamma (S)$ for any $S\subseteq T$.

A \emph{cost allocation} of $\Gamma=(N,\gamma)$ is a vector $\boldsymbol{a}=(a_i)_{i\in N}$, which consists of proposed costs to be paid by players in the grand coalition.
A cost allocation $\boldsymbol{a}$ is said \emph{efficient} if $\sum_{i\in N}a_i=\gamma(N)$,
and said \emph{group rational} if $\sum_{i\in S}a_i\leq \gamma(S)$ for any $S\subseteq N$.
In particular, $\boldsymbol{a}$ is said \emph{individual rational} if $a_i\leq \gamma(\{i\})$ for any $i\in N$.
An \emph{imputation} of $\Gamma$ is a cost allocation that is efficient and individual rational.
The \emph{core} of $\Gamma$, denoted by $\mathcal{C}(\Gamma)$, is the set of imputations that are group rational.
A \emph{core allocation} is a cost allocation in the core, which satisfies all players in the grand coalition and no player has an incentive to split off from the grand coalition.
A game $\Gamma$ is said \emph{balanced} if $\mathcal{C}(\Gamma)\not=\emptyset$ and \emph{total balanced} if $\mathcal{C}(\Gamma_T)\not=\emptyset$ for any nonempty $T\subseteq N$.

A \emph{population monotonic allocation scheme} (\emph{PMAS} for short) of $\Gamma=(N,\gamma)$ is a vector $\boldsymbol{a}=(a_{S,i})_{S\in 2^N\backslash\{\emptyset\}, i\in S}$ satisfying the following two conditions:
\begin{itemize}
	\item[\textendash] \emph{efficiency}: $\sum_{i\in S}a_{S,i}=\gamma(S)$ for any $S\in 2^N\backslash \{\emptyset\}$;
	\item[\textendash] \emph{monotonicity}: $a_{S,i}\geq a_{T,i}$ for any $S, T\in 2^N\backslash \{\emptyset\}$ with $S\subseteq T$ and any $i\in S$.
\end{itemize}
Let $\mathcal{P}(\Gamma)$ be the set of PMAS-es for $\Gamma$.
When $ \mathcal{P}(\Gamma)\not=\emptyset$, $\Gamma$ is said \emph{population monotonic} (also known as \emph{cross-monotonic}).
Let $\boldsymbol{a}=(a_{S,i})_{S\in 2^N\backslash \{\emptyset\}, i\in S}$ be a PMAS for $\Gamma$.
For any $S\in 2^N\backslash \{\emptyset\}$, denote by $\boldsymbol{a}_S=(a_{S,i})_{i\in S}$ the restriction of $\boldsymbol{a}$ to $S$, which is a core allocation of subgame $\Gamma_S$.
Notice that even total balancedness is not sufficient for the population monotonicity,
as a PMAS provides for every coalition a core allocation in a cross-monotonic way.
Hamers et al. \cite{HameMiqu14} proved that PMAS-es of monotonic cooperative games are always nonnegative.
We refer to \cite{Spru90} for more about PMAS-es.
\begin{lemma}[Hamers et al. \cite{HameMiqu14}]
  \label{thm:nonnegative}
  Let $\boldsymbol{a}$ be a PMAS for a monotonic cooperative game $\Gamma=(N,\gamma)$.
  Then $a_{S,i}\geq 0$ for any $S\in 2^N\backslash \{\emptyset\}$ and any $i\in S$.
\end{lemma}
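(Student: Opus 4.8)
The plan is to fix a PMAS $\boldsymbol{a}=(a_{S,i})$ of the monotonic game $\Gamma=(N,\gamma)$ and argue, coalition by coalition, that every entry $a_{S,i}$ is nonnegative. The natural induction is on $|S|$, starting from the largest coalition and working downward, exploiting the monotonicity axiom of the PMAS to push the problem from a coalition $S$ to the one-larger coalition $S\cup\{j\}$, together with the monotonicity of $\gamma$ itself to control the efficiency totals.

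First I would handle the base case $S=N$: here $\boldsymbol{a}_N$ is a core allocation of $\Gamma$, so for each $i\in N$, group rationality applied to the coalition $N\setminus\{i\}$ gives $\sum_{k\ne i} a_{N,k}\le \gamma(N\setminus\{i\})$, and combining with efficiency $\sum_{k\in N}a_{N,k}=\gamma(N)$ yields $a_{N,i}\ge \gamma(N)-\gamma(N\setminus\{i\})\ge 0$, the last inequality being exactly monotonicity of $\gamma$. For the inductive step, suppose every entry $a_{T,\cdot}$ with $|T|>|S|$ is nonnegative, and take any $i\in S$ with $S\ne N$. Pick any $j\in N\setminus S$ and set $T=S\cup\{j\}$. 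The PMAS monotonicity condition gives $a_{S,i}\ge a_{T,i}$; but $|T|=|S|+1>|S|$, so by the induction hypothesis $a_{T,i}\ge 0$, hence $a_{S,i}\ge 0$. (Equivalently, one can observe directly that $a_{S,i}\ge a_{N,i}\ge 0$ by chaining the monotonicity inequalities along $S\subseteq N$, which collapses the argument to the base case — I would present whichever version reads more cleanly.)

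The one genuine subtlety — and the only place care is needed — is the base case, where one must be sure that the restriction $\boldsymbol{a}_N$ really is a core allocation so that group rationality is available; this is exactly the remark made just before the lemma in the excerpt, namely that $\boldsymbol{a}_S$ is a core allocation of the subgame $\Gamma_S$ for every $S$, applied with $S=N$. Everything else is a one-line consequence of the two PMAS axioms and monotonicity of $\gamma$. So the main ``obstacle'' is not an obstacle at all: the proof is short, and the only thing to get right is the direction of the monotonicity inequality ($a_{S,i}\ge a_{T,i}$ for $S\subseteq T$, i.e.\ costs weakly decrease as the coalition grows) and the sign convention that $\gamma$ is a cost function so monotonicity reads $\gamma(S)\le\gamma(T)$.
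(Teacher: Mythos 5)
Your proof is correct, and since the paper only cites this result from Hamers et al.\ without reproducing a proof, there is nothing to diverge from: your argument (reduce everything to the grand coalition via the PMAS monotonicity chain $a_{S,i}\ge a_{N,i}$, then bound $a_{N,i}=\gamma(N)-\sum_{k\ne i}a_{N,k}\ge \gamma(N)-\gamma(N\setminus\{i\})\ge 0$ using efficiency, group rationality of $\boldsymbol{a}_N$, and monotonicity of $\gamma$) is exactly the standard one. The full induction is superfluous, as you yourself note; the two-line chaining version is the cleaner presentation.
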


\subsection{Graph theory}
Throughout, a graph is always finite, undirected and simple.
Let $n\in \mathbb{N}$.
We use $K_n$ to denote the complete graph with $n$ vertices,
use $C_n$ to denote the graph which is a cycle with $n$ vertices,
and use $P_n$ to denote the graph which is a path with $n$ vertices.
Let $H$ be a graph.
We use $V(H)$ to denote the vertex set of $H$ and use $E(H)$ to denote the edge set of $H$.
A graph is said $H$-\emph{free} if it contains no subgraph isomorphic to $H$.
A graph is \emph{bipartite} if it is odd-cycle-free.
A graph is a \emph{forest} if it is cycle-free.
A forest is a \emph{tree} if it is connected.
A vertex is \emph{pendant} if it has degree one.
An edge is \emph{pendant} if it is incident to a pendant vertex.

Let $G=(V,E)$ be a graph.
The \emph{distance} of two vertices $u$ and $v$ in $G$ is the minimum number of edges in a path connecting them.
The \emph{diameter} of $G$ is the largest distance between any two vertices in $G$.
For any $S\subseteq E$,
$G[S]$ denotes the edge-induced subgraph of $G$,
$V_S$ denotes the vertex set of $G[S]$,
and $\delta_S(v)$ denotes the set of edges incident to $v$ in $G[S]$.
A \emph{vertex cover} of $G$ is a vertex set $C\subseteq V$ such that each edge of $G$ intersects $C$.
The \emph{vertex cover number} of $G$, denoted by $\tau(G)$, is the minimum size of vertex covers in $G$.
A \emph{matching} of $G$ is an edge set $M\subseteq E$ without common vertices.
The \emph{matching number} of $G$, denoted by $\nu(G)$, is the maximum size of matchings in $G$.
Clearly, $\nu(G)\leq \tau (G)$,
since every vertex in a vertex cover only covers at most one edge in a matching.
It is well known that equality $\nu(G)= \tau (G)$ holds when $G$ is bipartite \cite{Schr03}.

\subsection{Vertex cover games}
A vertex cover game has players on edges and the game value is defined by the vertex cover number.
Formally, the \emph{vertex cover game} on a graph $G=(V,E)$ is a cooperative game $\Gamma_G=(N,\gamma)$, where $N=E$ and $\gamma (S)=\tau(G[S])$ for any $S\subseteq N$.

\begin{lemma}
  \label{lemma:monotonicity}
  Every vertex cover game is monotonic.
\end{lemma}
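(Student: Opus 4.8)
The plan is to unwind the definitions and reduce the statement to the monotonicity of the vertex cover number under edge-induced subgraphs. Fix a graph $G=(V,E)$ and write $\Gamma_G=(N,\gamma)$ with $N=E$ and $\gamma(S)=\tau(G[S])$. Let $S,T\in 2^N$ with $S\subseteq T$. By the definition of a monotonic game, it suffices to prove $\tau(G[S])\le \tau(G[T])$, i.e.\ $\gamma(S)\le\gamma(T)$.

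First I would take a minimum vertex cover $C$ of $G[T]$, so that $|C|=\tau(G[T])$ and $C\subseteq V_T$. Since $S\subseteq T$, every edge of $G[S]$ is also an edge of $G[T]$, and both of its endpoints lie in $V_S\subseteq V_T$. Hence each edge of $G[S]$ meets $C$, and in fact meets $C\cap V_S$, because both of its endpoints already belong to $V_S$. Therefore $C\cap V_S$ is a vertex cover of $G[S]$, which gives
\begin{equation*}
\tau(G[S])\le |C\cap V_S|\le |C|=\tau(G[T]).
\end{equation*}
As $S\subseteq T$ were arbitrary, $\Gamma_G$ is monotonic.

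There is no genuine obstacle in this argument; the only point needing a line of care is the bookkeeping that, under the convention introduced in Section~\ref{sec:prel}, a vertex cover of $G[S]$ is required to be a subset of $V_S$ rather than of $V$, which is precisely why one passes from $C$ to $C\cap V_S$. Intuitively, enlarging the edge set only adds covering constraints, so the minimum cover size cannot decrease.
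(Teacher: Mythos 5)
Your proof is correct and follows essentially the same route as the paper's: take a minimum vertex cover of $G[T]$ and observe it covers all edges of $G[S]$. The only difference is your (harmless and slightly more careful) intersection with $V_S$, which the paper omits by simply noting that every vertex cover of $G[T]$ is also a vertex cover of $G[S]$.
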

\begin{proof}
  Let $\Gamma_G=(N,\gamma)$ be the vertex cover game on a graph $G$.
  Let $S\subseteq T\subseteq N$.
  Notice that every vertex cover of $G[T]$ is also a vertex cover of $G[S]$.
  It follows that $\gamma(S)\leq \gamma(T)$.
\end{proof}

\begin{lemma}[Deng et al. \cite{DengIbar99}]
The vertex cover game $\Gamma_G$ on a graph $G$ is balanced if and only if $\nu(G)=\tau(G)$.
\end{lemma}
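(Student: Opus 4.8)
The plan is to treat the two implications of the equivalence separately. For ``$\nu(G)=\tau(G)\Rightarrow$ balanced'' I would produce an explicit core allocation from a maximum matching. For the converse I would show that every core allocation is forced to be a nonnegative fractional matching of $G$ of total weight $\tau(G)$, and then argue that the existence of such a fractional matching forces $\nu(G)=\tau(G)$; passing from the fractional object to an integral matching of size $\tau(G)$ is the step I expect to be the main obstacle.

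\emph{The ``if'' direction.} Assume $\nu(G)=\tau(G)$ and fix a maximum matching $M$ of $G$. Define $\boldsymbol{a}$ by $a_e=1$ for $e\in M$ and $a_e=0$ otherwise. Then $\sum_{e\in N}a_e=|M|=\nu(G)=\tau(G)=\gamma(N)$, so $\boldsymbol{a}$ is efficient. For group rationality, fix $S\subseteq N$: the set $S\cap M$ is a matching of $G[S]$, hence $\sum_{e\in S}a_e=|S\cap M|\le\nu(G[S])\le\tau(G[S])=\gamma(S)$. Thus $\boldsymbol{a}\in\mathcal{C}(\Gamma_G)$ and $\Gamma_G$ is balanced.

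\emph{The ``only if'' direction.} Let $\boldsymbol{a}\in\mathcal{C}(\Gamma_G)$, and assume $E\neq\emptyset$ (otherwise $\nu(G)=\tau(G)=0$). First, $\boldsymbol{a}\ge\boldsymbol{0}$: if $a_e<0$ for some $e$, then group rationality applied to $S=N\setminus\{e\}$, together with the monotonicity of $\gamma$ (Lemma~\ref{lemma:monotonicity}), gives $\tau(G[S])\ge\sum_{f\in S}a_f=\tau(G)-a_e>\tau(G)\ge\tau(G[S])$, a contradiction. Second, group rationality applied to the star $S$ consisting of all edges of $G$ incident to a vertex $v$, for which $\tau(G[S])=1$, gives $\sum_{e\ni v}a_e\le1$; so $\boldsymbol{a}$ is a nonnegative fractional matching of $G$. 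Finally, for every vertex cover $C$ of $G$ we have $\sum_{e\in E}a_e\le\sum_{v\in C}\sum_{e\ni v}a_e\le|C|$, since every edge meets $C$ and $\sum_{e\ni v}a_e\le1$ for each $v$; taking $C$ of minimum size gives $\tau(G)=\sum_{e\in E}a_e\le|C|=\tau(G)$, so both of these inequalities are equalities.

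\emph{Completing the ``only if'' direction.} This is the crux. Fix a minimum vertex cover $C$. Equality in the first inequality above means every edge $e$ with $a_e>0$ has exactly one endpoint in $C$, and equality in the second means $\sum_{e\ni v}a_e=1$ for every $v\in C$. Hence the support $H$ of $\boldsymbol{a}$ is a bipartite subgraph of $G$ with parts $C$ and $V\setminus C$, and $\boldsymbol{a}$ restricts on $H$ to a fractional matching of total weight $\sum_{v\in C}\sum_{e\ni v}a_e=|C|=\tau(G)$. Since $H$ is bipartite, König's theorem gives $\nu(H)=\tau(H)$; and since $H$ carries a fractional matching of weight $\tau(G)$ while every fractional matching of $H$ has weight at most $\tau(H)$ (by the bound used above), we get $\nu(H)=\tau(H)\ge\tau(G)$. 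Combining with $\tau(G)\ge\nu(G)\ge\nu(H)$ forces $\nu(G)=\tau(G)$. Alternatively, one may take $\boldsymbol{a}$ to be a vertex of the fractional matching polytope and exploit its half-integral structure: the integral part forms a matching $M_1$ and the half-valued edges form vertex-disjoint odd cycles $C^1,\dots,C^m$, so $\tau(G)=\sum_{e\in E}a_e=|M_1|+\sum_j|C^j|/2$ while every vertex cover of $G$ has size at least $|M_1|+\sum_j\lceil|C^j|/2\rceil$; this forces $m=0$, whence $\boldsymbol{a}$ is integral and $\nu(G)\ge|M_1|=\tau(G)$.
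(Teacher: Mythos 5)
The paper does not prove this lemma; it is quoted from Deng et al.\ \cite{DengIbar99}, whose original argument runs through the general LP-duality framework for combinatorial optimization games (core allocations $=$ optimal dual solutions of the covering LP, and balancedness $\Leftrightarrow$ no integrality gap). Your proof is correct and self-contained, and it correctly isolates the real content of the ``only if'' direction: group rationality on singleton-vertex stars and on complements of singletons forces a core allocation to be a nonnegative fractional matching of weight $\tau(G)$, and the remaining work is to upgrade this fractional certificate to $\nu(G)=\tau(G)$. Your bipartite-support argument does this cleanly: tightness against a minimum vertex cover $C$ forces every supported edge to meet $C$ exactly once, so the support is bipartite and K\"onig's theorem finishes. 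The only loose phrase is in your alternative ending: you cannot ``take $\boldsymbol{a}$ to be a vertex of the fractional matching polytope,'' since $\boldsymbol{a}$ is a given core element; what you may do is observe that $\nu^*(G)=\tau(G)$ and then pass to \emph{some} optimal half-integral vertex of that polytope, after which the odd-cycle counting goes through. Since that passage is offered only as an alternative and your primary argument is complete, this is a cosmetic issue, not a gap. Compared with the LP-duality route, your proof buys elementarity and an explicit core allocation in the ``if'' direction at the cost of invoking K\"onig's theorem (or half-integrality) where the general framework would cite total dual integrality-type facts.
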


\begin{lemma}[Deng et al. \cite{DengIbar00}]
The vertex cover game $\Gamma_G$ on a graph $G$ is totally balanced if and only if $G$ is bipartite.
\end{lemma}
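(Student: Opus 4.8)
The plan is to deduce the statement from the balancedness criterion of Deng et al.\ (stated just above) together with the observation that every subgame of a vertex cover game is again a vertex cover game. Indeed, for a nonempty $S\subseteq E$ the subgame $(\Gamma_G)_S$ has player set $S$ and characteristic function $T\mapsto \tau(G[T])$ on subsets $T\subseteq S$; since the edge-induced subgraph of $G[S]$ with edge set $T$ is exactly $G[T]$, this subgame coincides with the vertex cover game $\Gamma_{G[S]}$. Hence $\Gamma_G$ is totally balanced if and only if $\Gamma_{G[S]}$ is balanced for every nonempty $S\subseteq E$, which by the balancedness criterion is equivalent to requiring $\nu(G[S])=\tau(G[S])$ for every nonempty $S\subseteq E$.

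For the ``if'' direction, assume $G$ is bipartite, i.e.\ odd-cycle-free. Any edge-induced subgraph $G[S]$ is a subgraph of $G$, hence also odd-cycle-free, hence bipartite; by the equality $\nu=\tau$ for bipartite graphs (K\"onig's theorem, recalled in Section~\ref{sec:prel}) we get $\nu(G[S])=\tau(G[S])$ for every such $S$, so $\Gamma_G$ is totally balanced.

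For the ``only if'' direction I would argue by contraposition. If $G$ is not bipartite it contains an odd cycle; let $S$ be the edge set of one such cycle, of length $2k+1$. By the convention that an edge-induced subgraph keeps only the prescribed edges together with their endpoints, $G[S]$ is precisely the cycle $C_{2k+1}$, so that $\tau(G[S])=\tau(C_{2k+1})=k+1>k=\nu(C_{2k+1})=\nu(G[S])$; thus $\Gamma_{G[S]}$ is not balanced and $\Gamma_G$ is not totally balanced. The only delicate point is the bookkeeping around the edge-induced subgraph convention, both in the identification $(\Gamma_G)_S=\Gamma_{G[S]}$ and in isolating a clean odd cycle at the end; once that is pinned down, the result follows at once from the balancedness criterion and K\"onig's theorem.
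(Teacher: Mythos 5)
Your proposal is correct. The paper itself gives no proof of this lemma --- it is quoted from Deng et al.\ as a known result --- and your derivation is the standard one: identify each subgame $(\Gamma_G)_S$ with the vertex cover game $\Gamma_{G[S]}$, reduce total balancedness to $\nu(G[S])=\tau(G[S])$ for all nonempty $S$ via the balancedness criterion, then use K\"onig's theorem for the ``if'' direction and the edge set of an odd cycle (for which $\tau(C_{2k+1})=k+1>k=\nu(C_{2k+1})$, with no chords surviving in the edge-induced subgraph) for the ``only if'' direction. All the bookkeeping you flag as delicate checks out, so the argument is complete.
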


\begin{lemma}[Okamoto \cite{Okam03}]
The vertex cover game $\Gamma_G$ on a graph $G$ is submodular if and only if $G$ is $(K_3,P_4)$-free.
\end{lemma}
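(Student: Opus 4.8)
The plan is to prove both implications through a structural characterization of $(K_3,P_4)$-free graphs. The key observation, under the subgraph reading of $H$-freeness adopted in this paper, is that $G$ is $(K_3,P_4)$-free if and only if every connected component of $G$ is a star, equivalently $G$ is a vertex-disjoint union of stars. One direction is immediate, since a star $K_{1,m}$ contains no triangle and no path on four vertices, and this property is inherited by disjoint unions. For the converse, if a connected graph has no path on four vertices then (examining a longest path, which has at most three vertices) one checks that every other vertex must be adjacent only to the middle vertex of that path, so the graph is a star or a triangle; the triangle is then excluded by $K_3$-freeness. I would flag that the subgraph convention is genuinely needed here: $C_4=K_{2,2}$ has no \emph{induced} $P_4$, yet its vertex cover game fails submodularity, as the construction below shows.

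For the ``if'' direction, assume $G$ is a disjoint union of stars and let $K_1,\dots,K_r$ be its components containing at least one edge; isolated vertices and edgeless components are irrelevant, since $G[S]$ retains only the endpoints of edges of $S$. For $S\subseteq E$, the edge-induced subgraph $G[S]$ is the vertex-disjoint union of the stars $G[S\cap E(K_i)]$, and a star with at least one edge has vertex cover number exactly $1$; hence $\gamma(S)=\tau(G[S])$ equals the number of indices $i$ with $S\cap E(K_i)\neq\emptyset$. Writing $f_i(S)=1$ if $S\cap E(K_i)\neq\emptyset$ and $f_i(S)=0$ otherwise, we get $\gamma=\sum_{i=1}^r f_i$. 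Each $f_i$ is submodular: either check $f_i(S)+f_i(T)\geq f_i(S\cup T)+f_i(S\cap T)$ directly over the handful of cases determined by which of $S,T$ meet $E(K_i)$, or note that $f_i(S)=\min\{1,|S\cap E(K_i)|\}$ is a nondecreasing concave function of the modular map $S\mapsto|S\cap E(K_i)|$. A sum of submodular functions is submodular, so $\gamma$ is submodular.

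For the ``only if'' direction I would argue by contraposition. If $G$ is not $(K_3,P_4)$-free, then $G$ contains a triangle or a path on four vertices as a subgraph, and in either case there are three edges $e_1,e_2,e_3$ with $e_1\cap e_2\neq\emptyset$ and $e_2\cap e_3\neq\emptyset$ such that $\tau(G[\{e_1,e_2,e_3\}])\geq 2$ (two vertices are needed to cover a triangle, and the two extreme edges of a $P_4$ form a matching of size two), while $\tau(G[\{e_2\}])=\tau(G[\{e_1,e_2\}])=\tau(G[\{e_2,e_3\}])=1$. Taking $S=\{e_1,e_2\}$ and $T=\{e_2,e_3\}$ then gives $\gamma(S)+\gamma(T)=2$ but $\gamma(S\cup T)+\gamma(S\cap T)\geq 3$, so $\gamma$ violates submodularity.

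The main obstacle is the structural lemma in the first paragraph: one has to be attentive to the subgraph-versus-induced-subgraph convention (otherwise the statement is simply false, cf.\ $C_4$), and the claim that ``connected with no $P_4$ subgraph'' forces a star or a triangle, while elementary, needs a genuine case analysis. Once that characterization is in hand the two implications are short — the ``if'' part reduces to submodularity of coverage-type set functions, and the ``only if'' part to a single explicit pair of coalitions.
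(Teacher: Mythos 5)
Your proof is correct. Note that the paper does not prove this lemma at all --- it is quoted as a known result of Okamoto \cite{Okam03} --- so there is no in-paper argument to compare against; your write-up is a valid self-contained substitute. The structure you use closely parallels what the paper later does for the PMAS characterization: your observation that a $(K_3,P_4)$-free graph is exactly a disjoint union of stars is the diameter-$2$ analogue of Lemma~\ref{lemma:structure} (where $(K_3,C_4,P_5)$-free graphs are shown to be forests of diameter-$\leq 3$ trees), and your contrapositive argument via the coalitions $S=\{e_1,e_2\}$, $T=\{e_2,e_3\}$ is the same style of three-edge counterexample used in Lemma~\ref{thm:onlyif}. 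All the individual steps check out: the longest-path case analysis correctly yields ``star or triangle'' for connected $P_4$-subgraph-free graphs; $\gamma(S)=\sum_i \min\{1,|S\cap E(K_i)|\}$ is indeed a sum of coverage-type submodular functions; and the violation $\gamma(S)+\gamma(T)=2<3\leq\gamma(S\cup T)+\gamma(S\cap T)$ holds for both the triangle and the $P_4$. Your caution about the subgraph (rather than induced-subgraph) reading of $H$-freeness is also well placed and consistent with the paper's stated convention.
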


\section{An efficient characterization for population monotonicity}
\label{sec:structure}
In this section, we show that a graph induces a vertex cover game admitting PMAS-es if and only if the graph is $(K_3,C_4,P_5)$-free.
We decompose our proof into several lemmas.

\begin{figure}[h]
\vspace{-1.5em}
\centering
\includegraphics[width=0.71\textwidth]{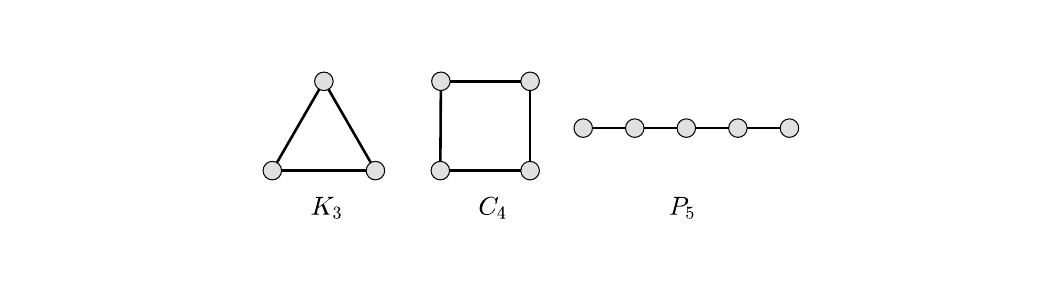}
\vspace{-2em}
\caption{Forbidden subgraphs}
\end{figure}
\vspace{-1em}

\begin{lemma}
\label{thm:onlyif}
Let $\Gamma_G=(N,\gamma)$ be the vertex cover game on a graph $G$.
If $\mathcal{P}(\Gamma_G)\not=\emptyset$, then $G$ is $(K_3,C_4,P_5)$-free.
\end{lemma}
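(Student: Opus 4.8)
\emph{Approach.} The plan is to prove the contrapositive: if $G$ has a subgraph isomorphic to $K_3$, $C_4$, or $P_5$, then $\mathcal{P}(\Gamma_G)=\emptyset$. First I would record the routine heredity of PMAS-es. Since $K_3,C_4,P_5$ have no isolated vertex, a subgraph of $G$ isomorphic to one of them equals $G[S_0]$ for its edge set $S_0$; and for any $\emptyset\neq S_0\subseteq N$ the subgame of $\Gamma_G$ on $S_0$ is precisely the vertex cover game $\Gamma_{G[S_0]}$ (the characteristic functions agree because $(G[S_0])[T]$ and $G[T]$ have the same edges and the same non-isolated vertices for every $T\subseteq S_0$). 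Restricting a PMAS to the coalitions contained in $S_0$ obviously preserves efficiency and monotonicity, so $\mathcal{P}(\Gamma_G)\neq\emptyset$ forces $\mathcal{P}(\Gamma_{G[S_0]})\neq\emptyset$; hence it suffices to show that none of $\Gamma_{K_3}$, $\Gamma_{C_4}$, $\Gamma_{P_5}$ admits a PMAS. (Equivalently, one can run all arguments below inside $\Gamma_G$ itself using only efficiency and monotonicity on edge-induced copies of $P_3$ and $P_4$ lying in $G$, so whether the forbidden graph is induced is irrelevant.)

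\emph{The $P_4$ gadget.} The engine is one elementary computation about paths on four vertices. Suppose $\boldsymbol{a}$ is a PMAS and $S=\{f_1,f_2,f_3\}$ is a coalition with $G[S]\cong P_4$, the $f_i$ listed along the path. Efficiency on the adjacent pairs $\{f_1,f_2\}$ and $\{f_2,f_3\}$ (each with game value $\tau(P_3)=1$) together with monotonicity gives $a_{S,f_1}+a_{S,f_2}\le 1$ and $a_{S,f_2}+a_{S,f_3}\le 1$; combining these with efficiency on $S$, where $\gamma(S)=\tau(P_4)=2$, yields $a_{S,f_1}\ge 1$ and $a_{S,f_3}\ge 1$, while monotonicity against singletons gives $a_{S,f_j}\le\tau(K_2)=1$. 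Hence $a_{S,f_1}=a_{S,f_3}=1$ and $a_{S,f_2}=0$: in any $P_4$-coalition the two pendant edges are each charged $1$ and the middle edge is charged $0$. (Nonnegativity of PMAS-es, Lemma~\ref{thm:nonnegative}, is not even needed here.)

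\emph{The three cases.} For $K_3$ with $N=E(K_3)$: every pair of edges is adjacent, so by efficiency and monotonicity $a_{N,e}+a_{N,f}\le 1$ for each of the three pairs, and summing gives $2\gamma(N)\le 3$, i.e.\ $4\le 3$ --- impossible. (Alternatively, a PMAS makes every subgame balanced, so $\Gamma_G$ is totally balanced and $G$ is bipartite by Deng et al.~\cite{DengIbar00}, excluding $K_3$.) For $C_4$ with edges $e_1,e_2,e_3,e_4$ in cyclic order: every three consecutive edges induce a $P_4$, so the gadget charges the corresponding middle edge $0$ in that coalition, and monotonicity forces $a_{N,e_i}\le 0$ for all $i$, contradicting $\sum_i a_{N,e_i}=\tau(C_4)=2$. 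For $P_5$ with edges $e_1,e_2,e_3,e_4$ along the path: applying the gadget to $\{e_1,e_2,e_3\}\cong P_4$ (where $e_3$ is pendant) and to $\{e_2,e_3,e_4\}\cong P_4$ (where $e_2$ is pendant) gives $a_{\{e_1,e_2,e_3\},e_3}=1$ and $a_{\{e_2,e_3,e_4\},e_2}=1$; monotonicity then gives $a_{\{e_2,e_3\},e_3}\ge 1$ and $a_{\{e_2,e_3\},e_2}\ge 1$, while efficiency on $\{e_2,e_3\}\cong P_3$ demands $a_{\{e_2,e_3\},e_2}+a_{\{e_2,e_3\},e_3}=1$ --- a contradiction.

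\emph{Main obstacle.} Once the $P_4$ gadget is isolated, $K_3$ and $C_4$ collapse immediately, so the real work is the $P_5$ case. The complication is that the gadget only pins the grand-coalition allocation of $\Gamma_{P_5}$ to $(1,0,0,1)$, and this vector is by itself consistent with efficiency, monotonicity, and nonnegativity; the inconsistency must be extracted from a proper subcoalition. Identifying the right one --- the central $P_3$-coalition $\{e_2,e_3\}$, which lies in two distinct $P_4$-coalitions that each load one of its two edges with $1$, although its own total charge may be only $1$ --- is the crux. Everything else is bookkeeping, together with the minor care needed to see that ``subgraph'' versus ``induced subgraph'' makes no difference, since only $P_3$- and $P_4$-coalitions, which are automatically edge-induced, enter the argument.
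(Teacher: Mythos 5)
Your proposal is correct and takes essentially the same route as the paper: a contrapositive argument deriving contradictions from efficiency and monotonicity applied to pair and triple coalitions inside each forbidden configuration, with your $K_3$ double-counting being literally the paper's computation. The only difference is packaging: the paper disposes of $C_4$ and $P_5$ in one uniform inequality $\gamma(\{1,2,3\})+\gamma(\{2,3,4\})\leq \gamma(\{1,2\})+\gamma(\{2,3\})+\gamma(\{3,4\})$, whereas you first pin down the exact allocation $(1,0,1)$ on every $P_4$-coalition and then treat $C_4$ (via the grand coalition) and $P_5$ (via the central pair $\{e_2,e_3\}$) separately, which is an equivalent rearrangement of the same monotonicity steps.
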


\begin{proof}
Let $\boldsymbol{a}\in \mathcal{P}(\Gamma_G)$.
We show that any of $K_3$, $C_4$ and $P_5$ yields a contradiction.

We first consider $K_3$.
Suppose $E(K_3)=\{1,2,3\}$.
Note that $\gamma(\{1,2,3\})=2$ and $\gamma(\{1,2\})=\gamma(\{2,3\})=\gamma(\{1,3\}=1$.
By efficiency and monotonicity, we have
\begin{equation*}
\begin{split}
 4
 & = \gamma(\{1,2,3\})+\gamma(\{1,2,3\}) \\
 & = a_{\{1,2,3\},1}+ a_{\{1,2,3\},2}+ a_{\{1,2,3\},3}+ a_{\{1,2,3\},1}+ a_{\{1,2,3\},2}+ a_{\{1,2,3\},3}\\
 & \leq a_{\{1,2\},1}+ a_{\{1,2\},2}+ a_{\{2,3\},3}+ a_{\{1,3\},1}+ a_{\{2,3\},2}+ a_{\{1,3\},3}\\
 & = \gamma(\{1,2\})+\gamma(\{2,3\})+\gamma(\{1,3\})\\
 & =3,
\end{split}
\end{equation*}
which yields a contradiction.

Now we check $C_4$ and $P_5$.
Let $H$ be either $C_4$ or $P_5$.
Suppose $E(H)=\{1,2,3,4\}$, where $i$ and $i+1$ are incident in $H$.
Note that $\gamma(\{1,2,3\})=\gamma(\{2,3,4\})=2$ and $\gamma(\{1,2\})=\gamma(\{2,3\})=\gamma(\{3,4\})=1$.
By efficiency and monotonicity, we have
\begin{equation*}
\begin{split}
 4
 & = \gamma(\{1,2,3\})+\gamma(\{2,3,4\}) \\
 & = a_{\{1,2,3\},1}+ a_{\{1,2,3\},2}+ a_{\{1,2,3\},3}+ a_{\{2,3,4\},2}+ a_{\{2,3,4\},3}+ a_{\{2,3,4\},4}\\
 & \leq a_{\{1,2\},1}+ a_{\{1,2\},2}+ a_{\{2,3\},3}+ a_{\{2,3\},2}+ a_{\{3,4\},3}+ a_{\{3,4\},4}\\
 & = \gamma(\{1,2\})+\gamma(\{2,3\})+\gamma(\{3,4\})\\
 & =3,
\end{split}
\end{equation*}
which yields a contradiction.
\end{proof}

The following lemma gives an alternative characterization for $(K_3,C_4,P_5)$-free graphs.
\begin{lemma}
\label{lemma:structure}
A graph $G$ is $(K_3,C_4,P_5)$-free if and only if each component of $G$ is a tree of diameter at most $3$.
\end{lemma}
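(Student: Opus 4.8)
The plan is to prove both directions by first observing that forbidding $K_3$, $C_4$ and $P_5$ simultaneously rules out \emph{every} cycle, which reduces the statement to an elementary fact about paths in trees. For the forward implication, suppose $G$ is $(K_3,C_4,P_5)$-free, and I would first show that $G$ is acyclic. Assume not, and pick a cycle $C$ of minimum length $\ell$ in $G$. If $\ell=3$, then $C$ is a copy of $K_3$. If $\ell=4$, then $C$ can have no chord, since a chord would create a triangle and $G$ is $K_3$-free; hence $G$ contains $C_4$ as a subgraph. If $\ell\ge 5$, then any five consecutive vertices along $C$ span a subgraph isomorphic to $P_5$. Each case contradicts the hypothesis, so $G$ is a forest and every component is a tree. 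Now take any component $T$: if $\mathrm{diam}(T)\ge 4$, choose $u,v\in V(T)$ with $d_T(u,v)\ge 4$, so the (unique) $u$--$v$ path in $T$ has at least four edges and contains a subpath with exactly four edges, i.e.\ a copy of $P_5$ in $G$ — again a contradiction. Hence each component of $G$ is a tree of diameter at most $3$.

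For the converse, suppose every component of $G$ is a tree of diameter at most $3$. Then $G$ is a forest, so it contains no cycle at all; in particular it is $K_3$-free and $C_4$-free. Suppose, for contradiction, that $v_1v_2v_3v_4v_5$ is a subgraph of $G$ isomorphic to $P_5$. Since a path is connected, $v_1,\dots,v_5$ all lie in a single component $T$, and because $T$ is a tree the walk $v_1v_2v_3v_4v_5$ (having no repeated vertex) is the unique $v_1$--$v_5$ path in $T$; thus $d_T(v_1,v_5)=4$, contradicting $\mathrm{diam}(T)\le 3$. Therefore $G$ is $(K_3,C_4,P_5)$-free, completing the proof.

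I do not anticipate a genuine obstacle here; the only two points needing a line of care are (i) the claim that a shortest $4$-cycle in a $K_3$-free graph is chordless, so that it is an honest $C_4$ subgraph, and (ii) the remark that in a tree the unique path between two vertices realizes their distance, so that any $P_5$ subgraph forces a pair of vertices at distance exactly $4$. Both can be handled as one-sentence sub-claims, and the case analysis on the length of a shortest cycle is the only mildly delicate step.
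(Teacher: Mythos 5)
Your proof is correct and follows essentially the same route as the paper: both directions reduce to showing that the three forbidden subgraphs exclude all cycles (the paper splits cycles by parity, you split on the length of a shortest cycle — the same idea) and then bounding the diameter of each tree component. If anything, your converse is slightly more careful: the paper asserts that any graph of diameter at most $3$ is $P_5$-free, which is false for general graphs (e.g.\ $C_5$ has diameter $2$), whereas your appeal to the uniqueness of paths in a tree closes that gap cleanly.
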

\begin{proof}
We first prove the ``only if'' part.
Notice that a $(K_3, P_5)$-free graph does not contain any odd cycle and that a $(C_4,P_5)$-free graph does not contain any even cycle.
It follows that every $(K_3,C_4,P_5)$-free graph is a forest.
Since any tree of diameter larger than $3$ contains a $P_5$,
each component of a $(K_3,C_4,P_5)$-free graph is a tree of diameter at most $3$.

Now we prove the ``if'' part.
Let $G$ be a graph whose components are trees of diameter at most $3$.
Thus $G$ is $(K_3,C_4)$-free.
Since any graph of diameter at most $3$ is $P_5$-free, $G$ is $(K_3,C_4,P_5)$-free.
\end{proof}

\begin{figure}[h]
\vspace{-3em}
    \centering
    \begin{minipage}{0.45\textwidth}
        \centering
        \includegraphics[width=0.45\textwidth]{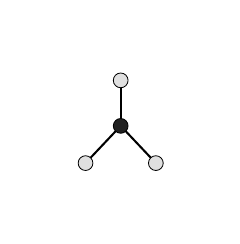}
        \vspace{-2.5em}
        \caption{An example of stars. The dark vertex is the center.}
        \label{fig:star}
    \end{minipage}\hfill
    \begin{minipage}{0.515\textwidth}
        \centering
        \includegraphics[width=.85\textwidth]{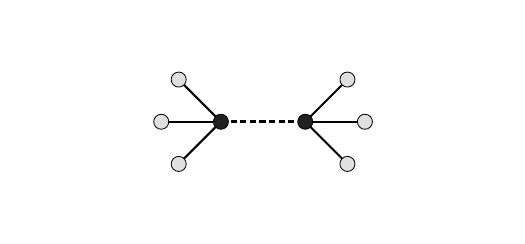}
        \vspace{-2.5em}
        \caption{An example of pisceses. The dark vertices are the bases and the dashed edge is a free rider.}
         \label{fig:pisces}
    \end{minipage}
\end{figure}
\vspace{-.5em}

Before proceeding, we introduce some notions for simplicity.
A tree of diameter $2$ is called a \emph{star}.
The unique non-pendant vertex of a star is called the \emph{center} (see Figure \ref{fig:star}).
Clearly, the center of a star is a minimum vertex cover for the star.
A $K_2$ can also be viewed as a star but only one endpoint can be viewed as the center.
A tree of diameter $3$ is called a \emph{pisces}.
A pisces can be obtained from two stars by joining their centers with an edge.
The two non-pendant vertices in a pisces are called the \emph{bases} which form a minimum vertex cover for the pisces.
The unique non-pendant edge in a pisces is called a \emph{free rider} which has special significance for vertex cover games (see Figure \ref{fig:pisces}).
To see this, consider the vertex cover game on a pisces.
For any coalition without the non-pendant edge, there is a minimum vertex cover which is a subset of the two bases.
Hence the non-pendent edge in a pisces can alway take a free ride and get covered by a minimum vertex cover of other edges.

Let $G=(V,E)$ be $(K_3,C_4,P_5)$-free.
Lemma \ref{lemma:structure} implies that every component of $G$ is either a star or a pisces.
Let $C^*\subseteq V$ be the set of centers of stars and bases of pisceses in $G$.
Then $C^*$ is a minimum vertex cover of $G$.
For any nonempty $S\subseteq E$, there is a minimum vertex cover $C^*_S\subseteq C^*$ of $G[S]$, as every component of $G[S]$ is either a star or a pisces.
We will use these notations repeatedly in the rest of this paper.
Now we are ready to present one of our main results.

\begin{theorem}
\label{thm:PM}
  Let $\Gamma_G=(N,\gamma)$ be the vertex cover game on a graph $G$.
  Then $\Gamma_G$ is population monotonic if and only if $G$ is $(K_3,C_4,P_5)$-free.
\end{theorem}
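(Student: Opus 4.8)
The statement is an "if and only if" split neatly by Lemma \ref{thm:onlyif} (the "only if" direction) and the structural Lemma \ref{lemma:structure}. So the entire burden is the "if" direction: assuming $G$ is $(K_3,C_4,P_5)$-free, construct an explicit PMAS $\boldsymbol{a}$ for $\Gamma_G$. By Lemma \ref{lemma:structure} every component of $G$ is a star or a pisces, and since a PMAS can be built componentwise (the vertex cover number is additive over components, so gluing independent PMAS-es on the components yields a PMAS on $G$), it suffices to define the allocation coalition-by-coalition using the distinguished vertex set $C^*$ and the subsets $C^*_S$ already set up before the theorem statement.

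**Construction.** For a nonempty coalition $S\subseteq E$, I would allocate as follows. Recall $G[S]$ decomposes into components that are stars or pisceses, and $C^*_S\subseteq C^*$ is a minimum vertex cover contained in the centers/bases. The natural idea: each edge $e=\{u,v\}\in S$ pays the "fraction of cover vertices it is responsible for." Concretely, for a vertex $w\in C^*_S$, the edges of $S$ incident to $w$ should split the cost $1$ of covering $w$; a clean choice is $a_{S,e}=\sum_{w\in e\cap C^*_S}\frac{1}{|\delta_S(w)\cap(\text{edges forced to use }w)|}$ — but one must be careful that in a pisces component with $\ell$ pendant edges at one base, $m$ pendant edges at the other base, and the free rider present, both bases are needed, whereas if the free rider is absent only the bases with pendant edges are needed. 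The robust formulation is: put $a_{S,e}=\sum_{w\in C^*_S}\frac{x_{S,w}(e)}{d_S(w)}$ where $d_S(w)$ is the number of edges of $S$ "charged to" $w$ and $x_{S,w}(e)\in\{0,1\}$ indicates whether $e$ is charged to $w$, chosen so that (i) every $w\in C^*_S$ gets total charge exactly its "weight" and $\sum_e a_{S,e}=|C^*_S|=\gamma(S)$ (efficiency), and (ii) the charge $d_S(w)$ only weakly increases as $S$ grows, so each $a_{S,e}$ weakly decreases (monotonicity). I would make this precise by treating stars and pisceses separately: in a star coalition every present edge is charged $1/|S_{\text{comp}}|$ to the one center it must cover (or split between the two endpoints of a lone $K_2$ in a fixed way consistent across coalitions); in a pisces coalition, pendant edges at base $b$ are charged to $b$ and split the cost of $b$ evenly among all pendant edges at $b$ present in $S$, and the free rider, when present, is charged to whichever base already has fewer pendant edges (ties broken by a fixed rule) — or, more symmetrically, is charged $0$ whenever at least one pendant edge is present at each base, since then both bases are paid for anyway.

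**Verifying the PMAS axioms.** Efficiency is immediate once the charging scheme saturates every vertex of $C^*_S$ to total cost $1$ and charges nothing outside $C^*_S$. Monotonicity is the crux: I must check $a_{S,e}\ge a_{T,e}$ whenever $S\subseteq T$ and $e\in S$. The key observation is that adding edges to a coalition can only (a) add pendant edges at a base, which increases the denominator and so decreases each pendant edge's payment there; (b) promote a star component to a pisces by adding the free rider or a pendant edge at the other center, which again only spreads cost further; or (c) merge components, which does not affect the internal charges. The one delicate case is the free rider: I need the free rider's charge to not jump up when $T$ loses the "slack" that let it ride for free — but by $(K_3,C_4,P_5)$-freeness the structure is so rigid that once a pisces component contains pendant edges at both bases in $S$, this persists in $T$, so the free rider keeps paying $0$; and if in $S$ the free rider must pay (because one base has no pendant edge in $S$), then in $T$ either a pendant edge arrives at that base, dropping the free rider's charge to $0$, or it doesn't, and the free rider's charge of $1$ divided among however many edges are forced to cover the lone base can only stay the same or shrink. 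Nonnegativity is automatic from Lemma \ref{thm:nonnegative} (or directly, since all charges are nonnegative).

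**Main obstacle.** I expect the real work to be bookkeeping the pisces case cleanly: getting a single uniform rule for how the free rider is charged across all coalitions $S$ that is simultaneously efficient (sums correctly whether or not both bases are needed) and monotone (never forces a payment increase as $S$ grows). A wrong tie-breaking rule for the free rider could violate monotonicity when a pendant edge disappears from the smaller coalition's perspective — i.e., going from $S$ to $T$ by adding an edge that makes a previously-needed base now "cheaper" and shifts the free rider's burden. The safest route, which I would follow, is the "free rider pays $0$ unless it is the unique edge keeping a base alive" convention, and then prove that under $(K_3,C_4,P_5)$-freeness the set of coalitions where the free rider must pay is upward-closed-free in exactly the way needed. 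Everything else — star components, disjoint union, efficiency — is routine.
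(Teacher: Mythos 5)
Your proposal is correct and follows essentially the same route as the paper: the ``only if'' direction is delegated to Lemma \ref{thm:onlyif}, and the ``if'' direction constructs the same explicit allocation in which each vertex of the canonical cover $C^*_S$ is split equally among the non-free-rider edges it covers, with the free rider paying $0$ when accompanied and $1$ only when it is alone. The efficiency and monotonicity checks you outline (denominators only grow with $S$; the free rider's payment can only drop from $1$ to $0$) are exactly the paper's argument, so the convention you settle on at the end is the right one and the remaining bookkeeping is as routine as you expect.
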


\begin{proof}
The ``only if'' part follows from Lemma \ref{thm:onlyif}.
Now we prove the ``if'' part.
Assume that $G$ is $(K_3,C_4,P_5)$-free and $\Gamma_G$ is the vertex cover game on $G$.
By Lemma \ref{lemma:structure}, every component of $G$ is either a star or a pisces.
Let $C^*$ be the set of centers of stars and bases of pisceses in $G$.
For any $S\in 2^N\backslash \{\emptyset\}$,
let $C^*_S\subseteq C^*$ be a minimum vertex cover of $G[S]$,
and define the cost allocation by
\begin{equation*}
a_{S,i}=
\begin{cases}
\:\:\,\, 0 & \text{if $i$ is a free rider incident to other edges in $S$,}
\vspace{1.5 mm}\\
\:\:\,\, 1 & \text{if $i$ is a free rider not incident to other edges in $S$,}
\vspace{1.5 mm}\\
\frac{1}{\lambda_S(i)}& \text{otherwise,}
\end{cases}
\end{equation*}
where $\lambda_S(i)$ is the number of non-free rider edges in $S$ incident to the vertex in $C^*_S$ that covers $i$.
The idea behind this allocation scheme is simple.
For any $S\in 2^N\backslash \{\emptyset\}$,
split every vertex in $C^*_S$ equally among non-free rider edges in $S$ covered by the vertex,
unless a free rider is the unique edge in $S$ covered by it.
It remains to show that $(a_{S,i})_{S\in 2^N\backslash \{\emptyset\},i\in S}$ is a PMAS for $\Gamma_G$.

We first prove efficiency.
Let $S\in 2^N\backslash \{\emptyset\}$.
By the construction, we have $\sum_{i\in \delta_S(v)} a_{S,i}=1$ for any $v\in C^*_S$.
Further notice that free riders are the only possible edges incident to more than one vertex in $C^*_S$.
Since $a_{S,i}=0$ for any free rider $i$ that is incident to other edges in $S$, we have
\begin{equation*}
\sum_{i\in S} a_{S,i}=\sum_{v\in C^*_S}\sum_{i\in \delta_S(v)} a_{S,i}=\lvert C^*_S\rvert=\gamma(S).
\end{equation*}

We now check monotonicity.
Let $S,T\in 2^N\backslash \{\emptyset\}$ with $S\subseteq T$ and let $i \in S$.
We distinguish two cases of $i$.
First assume that $i$ is a free rider.
We have $1=a_{S,i}\geq a_{T,i}\geq 0$ if $i$ is not incident to other edges in $S$ and $  a_{S,i}=a_{T,i}=0$ otherwise.
It follows that $a_{S,i}\geq a_{T,i}$ when $i$ is a free rider.
Now assume that $i$ is not a free rider.
Since $S\subseteq T$, we have $\lambda_S(i)\leq \lambda_T(i)$, implying that
\begin{equation*}
  a_{S,i}=\frac{1}{\lambda_{S}(i)}\geq \frac{1}{\lambda_{T} (i)}=a_{T,i}.
\end{equation*}
Therefore, $a_{S,i}\geq a_{T,i}$ follows in either case.
\end{proof}

Our proof for Theorem \ref{thm:PM} is constructive, which provides a PMAS for every population monotonic vertex cover game and motivates our subsequent work.
The PMAS in our proof is based on a simple principle:
for every vertex in a special minimum vertex cover, share the cost equally among non-free rider players covered by the vertex.
By Lemma \ref{lemma:structure}, every component of a $(K_3$, $C_4$, $P_5)$-free graph has at most two non-pendant vertices.
Hence a $(K_3$, $C_4$, $P_5)$-free graph can be recognized efficiently, which implies that the population monotonicity of vertex cover games can be determined efficiently.

\begin{corollary}
The population monotonicity of vertex cover games can be determined in polynomial time and a PMAS, if exists, can be constructed accordingly.
\end{corollary}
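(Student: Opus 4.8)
The plan is to read the corollary off Theorem~\ref{thm:PM} together with Lemma~\ref{lemma:structure} and the explicit allocation scheme built in the proof of Theorem~\ref{thm:PM}. Throughout, let $G=(V,E)$ be the input graph, with $n=|V|$, $m=|E|$, and player set $N=E$.

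For the decision part I would argue as follows. By Theorem~\ref{thm:PM}, $\Gamma_G$ is population monotonic if and only if $G$ is $(K_3,C_4,P_5)$-free, and by Lemma~\ref{lemma:structure} this holds if and only if every component of $G$ is a tree of diameter at most~$3$. This can be tested in $O(n+m)$ time by one graph traversal: compute the connected components, check that each is acyclic (its number of edges equals its number of vertices minus one), and check that each has diameter at most~$3$. The diameter test is immediate once one observes that a tree has diameter at most~$3$ precisely when it has at most two non-pendant vertices---equivalently, when it is a star or a pisces---so it suffices to count, within each component, the vertices of degree at least two. (Alternatively, and more crudely, $(K_3,C_4,P_5)$-freeness can be verified by brute force over all vertex subsets of size at most five, which is already polynomial.)

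For the construction part, the point to make explicit is that a PMAS is a vector indexed by the exponentially many pairs $(S,i)$ with $\emptyset\neq S\subseteq N$ and $i\in S$, so it cannot be written out in full; ``constructing a PMAS'' here should mean exhibiting a procedure that, given any such pair, returns the entry $a_{S,i}$ in time polynomial in $n$. The scheme from the proof of Theorem~\ref{thm:PM} does exactly this. As preprocessing, compute once---in linear time, from the component structure---the vertex set $C^*$ of centers of stars and bases of pisceses of $G$, fixing an arbitrary endpoint as the ``center'' of each $K_2$ component. On a query $(S,i)$, restrict attention to $G[S]$, again a disjoint union of stars and pisceses; select the canonical minimum vertex cover $C^*_S\subseteq C^*$; decide whether $i$ is a free rider of its component and, if so, whether it is incident to another edge of $S$; and otherwise compute $\lambda_S(i)$, the number of non-free-rider edges of $S$ meeting the vertex of $C^*_S$ that covers $i$. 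Each of these is an elementary computation on a graph with at most $n$ vertices, so $a_{S,i}$ is computable in polynomial time. Correctness requires nothing new: the proof of Theorem~\ref{thm:PM} already verifies efficiency and monotonicity of this vector whenever $G$ is $(K_3,C_4,P_5)$-free, and Lemma~\ref{thm:onlyif} shows no PMAS exists otherwise.

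I expect the only real friction to be bookkeeping rather than mathematics: one must fix the arbitrary choices---the ``center'' of each $K_2$ component, and the canonical $C^*_S$ for each $S$---by a deterministic, polynomial-time rule, so that the evaluation procedure is well defined and returns a single vector rather than depending on arbitrary selections. Pinning down these conventions (for instance, ordering the vertices once and always choosing the first admissible one) is routine, and once it is done both assertions of the corollary follow at once.
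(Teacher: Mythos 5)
Your proposal is correct and follows essentially the same route as the paper: the decision part is read off Theorem~\ref{thm:PM} and Lemma~\ref{lemma:structure} (each component is a star or a pisces, so recognition is a linear-time degree/diameter check), and the construction part is exactly the explicit allocation scheme from the proof of Theorem~\ref{thm:PM}. Your added care about interpreting ``construct'' as a polynomial-time evaluation procedure for each entry $a_{S,i}$ (since the full vector has exponential size) and about fixing the choices of $C^*$ and $C^*_S$ deterministically is sensible bookkeeping that the paper itself only acknowledges later, in the concluding remarks.
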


\section{A dual-based description of PMAS-es with free riders}
\label{sec:dualdescription}

Let $G=(V,E)$ be a graph and $S\subseteq E$ be a nonempty edge set.
Denote by $LP(S)$ the following linear program defined on $G[S]$.
\begin{alignat}{2}
\min\quad & \sum_{v \in V}y_{v} &{}& \nonumber \label{eqn:lp}\\
\lplabel[lp]{$LP(S)$:}\mbox{s.t.}\quad
 &y_{u}+y_{v} &\geq 1, \quad& \forall~ i_{uv} \in S, \\
 &~\quad\quad y_v&\geq 0, \quad& \forall ~v\in V_S.
\end{alignat}
The incidence vector of any minimum vertex cover in $G[S]$ is a feasible solution of $LP(S)$.
It follows that $\tau(G[S])$ is lower bounded by the optimum of $LP(S)$.
K\"{o}nig Theorem states that the gap between $\tau(G[S])$ and the optimum of $LP(S)$ is closed when $G[S]$ is bipartite.
Denote by $DP(S)$ the dual of $LP(S)$.
Hence $\tau(G[S])$ equals the optimum of $DP(S)$ when $G[S]$ is bipartite.
\begin{alignat}{2}
\max\quad & \sum_{i \in S}x_{S,i} &{}& \nonumber\\
\lplabel[lp:dual]{$DP(S)$:}\mbox{s.t.}\quad
& \sum_{i\in \delta_S(v)} x_{S,i}\leq 1, &\quad& \forall~ v \in V_S, \label{ineq:vertex}\\
&~~~\quad\quad x_{S,i} \geq 0, &{}& \forall~ i \in S. \label{ineq:edge}
\end{alignat}

Further assume that $G$ is $(K_3,C_4,P_5)$-free.
Let $\Gamma_G=(N,\gamma)$ be the vertex cover game on $G$.
For any $S\in 2^N\backslash \{\emptyset\}$, $\gamma(S)$ equals the optimum of $DP(S)$.
Moreover, we have the following observation for PMAS-es of vertex cover games.

\begin{lemma}
\label{thm:dualallocation}
Let $\Gamma_G=(N,\gamma)$ be the vertex cover game on a graph $G$ and $\boldsymbol{a}$ be a PMAS for $\Gamma_G$.
Then $\boldsymbol{a}_S$ is an optimal solution of $DP(S)$ for any $S\in 2^N\backslash \{\emptyset\}$.
\end{lemma}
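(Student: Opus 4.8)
The plan is to show that $\boldsymbol{a}_S = (a_{S,i})_{i\in S}$ is both feasible and optimal for $DP(S)$. Since $G$ is $(K_3,C_4,P_5)$-free, every $G[S]$ is bipartite (being a forest by Lemma \ref{lemma:structure}), so by K\"{o}nig's Theorem the optimum of $DP(S)$ equals $\tau(G[S]) = \gamma(S)$. Thus it suffices to prove that $\boldsymbol{a}_S$ is feasible for $DP(S)$ and has objective value $\gamma(S)$; feasibility plus matching the LP optimum forces optimality.

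First I would check that $\boldsymbol{a}_S$ satisfies the objective-value requirement: this is immediate, since $\boldsymbol{a}$ is a PMAS and hence $\boldsymbol{a}_S$ is efficient for the subgame $\Gamma_S$, giving $\sum_{i\in S} a_{S,i} = \gamma(S)$, which by the above equals the optimum of $DP(S)$. Next I would verify nonnegativity, constraint \eqref{ineq:edge}: this follows directly from Lemma \ref{thm:nonnegative} together with Lemma \ref{lemma:monotonicity}, since every vertex cover game is monotonic and so all entries of a PMAS are nonnegative.

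The remaining step, which I expect to be the main obstacle, is the vertex constraint \eqref{ineq:vertex}: for every $v \in V_S$ we need $\sum_{i\in \delta_S(v)} a_{S,i} \leq 1$. The idea is to localize to the subgame on the edges at $v$. Fix $v$ and let $T = \delta_S(v) \subseteq S$ be the set of edges of $G[S]$ incident to $v$. Then $G[T]$ is a star with center $v$, so $\gamma(T) = \tau(G[T]) = 1$. By efficiency of the PMAS for the subgame $\Gamma_T$ we have $\sum_{i\in T} a_{T,i} = \gamma(T) = 1$. By monotonicity of the PMAS, since $T \subseteq S$, each $a_{S,i} \leq a_{T,i}$ for $i \in T$. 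Hence
\begin{equation*}
\sum_{i\in \delta_S(v)} a_{S,i} = \sum_{i\in T} a_{S,i} \leq \sum_{i\in T} a_{T,i} = 1,
\end{equation*}
which is exactly \eqref{ineq:vertex}. (If $T = \emptyset$ the constraint is vacuous; note $v \in V_S$ guarantees $T\neq\emptyset$ anyway.) Combining the three checks, $\boldsymbol{a}_S$ is feasible for $DP(S)$ with objective value equal to the LP optimum, so it is an optimal solution, completing the proof.
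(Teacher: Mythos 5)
Your proposal is correct and follows essentially the same route as the paper: nonnegativity from Lemma \ref{thm:nonnegative}, the vertex constraints by comparing $a_{S,i}$ with $a_{\delta_S(v),i}$ via monotonicity and using $\gamma(\delta_S(v))=1$, and optimality from efficiency plus K\"{o}nig's theorem on the bipartite graph $G[S]$. The only cosmetic difference is that you should explicitly note that $G$ is $(K_3,C_4,P_5)$-free because a PMAS exists (Lemma \ref{thm:onlyif} or Theorem \ref{thm:PM}), which the paper does cite.
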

\begin{proof}
We first prove that $\boldsymbol{a}_S$ is feasible to $DP(S)$.
Since vertex cover games are monotonic, the nonnegativity of $\boldsymbol{a}_S$ follows from Lemma \ref{thm:nonnegative}.
It remains to show that $\sum_{i\in \delta_S(v)} a_{S,i}\leq 1$  for any $v\in V_S$.
Let $v\in V_S$.
Clearly, $\delta_S(v)\subseteq S$.
By efficiency and monotonicity, we have
\begin{equation*}
  \sum_{i\in \delta_S(v)} a_{S,i}\leq \sum_{i\in \delta_S(v)}a_{\delta_S(v),i}=\gamma(\delta_S(v))=1.
\end{equation*}
Hence $\boldsymbol{a}_S$ is feasible to $DP(S)$.

Now we prove that $\boldsymbol{a}_S$ is optimal to $DP(S)$.
By efficiency, we have
\begin{equation*}
  \sum_{i\in S} a_{S,i}=\gamma(S).
\end{equation*}
Theorem \ref{thm:PM} implies that $G[S]$ is $(K_3,C_4,P_5)$-free, which is a special bipartite graph.
Hence $\gamma(S)$ equals the optimum of $DP(S)$, implying that $\boldsymbol{a}_S$ is an optimal solution of $DP(S)$.
\end{proof}

Lemma \ref{thm:dualallocation} implies that for vertex cover games, allocations for each coalition in a PMAS are dual optimal solutions of corresponding minimum vertex cover problem.
In the following, we present a more precise dual-based description for PMAS-es with free riders.
Let $C^*$ be the set of centers of stars and bases of pisceses in $G$ and $C^*_S\subseteq C^*$ be a minimum vertex cover of $G[S]$.
Let $\pi(S)$ be the linear system \eqref{ineq:vertex}-\eqref{ineq:edge} and $\pi^* (S)$ be the linear system obtained from $\pi (S)$ by
\begin{itemize}
	\item[\textendash] setting \eqref{ineq:vertex} to $\sum_{i\in \delta_S(v)} x_{S,i}= 1$ for any $v\in C^*_S$, and
	\item[\textendash] setting \eqref{ineq:edge} to $x_{S,i}=0$ for any free rider $i$ incident to other edges in $S$.
\end{itemize}
Clearly, $\pi^* (S)\subseteq \pi(S)$.
Moreover, $\pi^* (S)$ is a subset of optimal solutions of $DP(S)$.

\begin{lemma}
Let $G=(V,E)$ be a $(K_3,C_4,P_5)$-free graph.
Then every vector in $\pi^* (S)$ is an optimal solutions of $DP(S)$ for any nonempty $S\subseteq E$.
\end{lemma}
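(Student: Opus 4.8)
The claim is that every vector $\boldsymbol{x}$ satisfying the linear system $\pi^*(S)$ is an optimal solution of $DP(S)$. By the discussion preceding the statement, $\gamma(S) = \tau(G[S])$ equals the optimum of $DP(S)$ (since $G[S]$ is bipartite, being $(K_3,C_4,P_5)$-free), and $\lvert C^*_S\rvert = \tau(G[S])$. So it suffices to show two things for any $\boldsymbol{x}\in\pi^*(S)$: (i) $\boldsymbol{x}$ is feasible for $DP(S)$, which is immediate since $\pi^*(S)\subseteq\pi(S)$ and $\pi(S)$ is exactly the feasibility system \eqref{ineq:vertex}--\eqref{ineq:edge}; and (ii) the objective value $\sum_{i\in S}x_{S,i}$ equals $\lvert C^*_S\rvert$.

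\textbf{Establishing the objective value.} The plan for (ii) is to sum the equality constraints over $C^*_S$ and account for multiplicities. Since $C^*_S$ is a vertex cover of $G[S]$, every edge $i\in S$ is incident to at least one vertex of $C^*_S$; since $G[S]$ has only stars and pisceses as components, an edge is incident to \emph{two} vertices of $C^*_S$ precisely when it is a free rider incident to other edges in $S$ (the non-pendant edge of a pisces all of whose edges survive in $S$), and otherwise to exactly one. Thus
\begin{equation*}
\sum_{v\in C^*_S}\sum_{i\in\delta_S(v)}x_{S,i}
=\sum_{i\in S}\bigl\lvert\{v\in C^*_S: v\in\delta_S(v)\text{ for }i\}\bigr\rvert\, x_{S,i}
=\sum_{i\in S}x_{S,i}+\sum_{i\text{ free rider, incident to other edges in }S}x_{S,i}.
\end{equation*}
The first constraint modification in $\pi^*(S)$ forces $\sum_{i\in\delta_S(v)}x_{S,i}=1$ for every $v\in C^*_S$, so the left-hand side is $\lvert C^*_S\rvert$; the second modification forces $x_{S,i}=0$ for exactly those free riders counted with multiplicity two, so the correction term vanishes. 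Hence $\sum_{i\in S}x_{S,i}=\lvert C^*_S\rvert=\gamma(S)$, which is the optimum of $DP(S)$.

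\textbf{Where the work lies.} There is no serious obstacle; the only point requiring care is the combinatorial bookkeeping of which edges are double-covered by $C^*_S$, and the verification that these are exactly the free riders that are incident to other edges in $S$ — this is precisely the defining property of a free rider noted in the paragraph introducing pisceses, together with Lemma \ref{lemma:structure} applied to $G[S]$. A pendant edge (including an isolated $K_2$) touches exactly one center/base; the free-rider edge of a pisces touches two bases, but only when at least one pendant edge at each base also lies in $S$ — if it is the unique surviving edge at a base in $S$, that base is not needed in $C^*_S$ and can be dropped, so we may and do choose $C^*_S$ to contain only vertices that cover a non-free-rider edge of $S$, making the free rider single-covered in that degenerate case. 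I would state this choice of $C^*_S$ explicitly at the start of the proof. Given that, feasibility is inherited from $\pi^*(S)\subseteq\pi(S)$, the objective computation above closes the argument, and no additional lemmas beyond Lemma \ref{lemma:structure} and the König-theorem remark are needed.
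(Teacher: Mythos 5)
Your proposal is correct and follows essentially the same route as the paper's proof: feasibility comes from $\pi^*(S)\subseteq\pi(S)$, and optimality from summing the equality constraints over $C^*_S$, observing that the only edges counted twice are free riders incident to other edges in $S$, whose values are forced to zero. Your extra care about choosing $C^*_S$ so that a free rider that is the sole surviving edge at a base is single-covered is a valid clarification, but it is already guaranteed by the minimality of $C^*_S$, which the paper relies on implicitly.
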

\begin{proof}
Let $C^*$ be the set of centers of stars and bases of pisceses in $G$ and $S\subseteq E$ be a nonempty edge set.
Let $C^*_S\subseteq C^*$ be a minimum vertex cover of $G[S]$ and $\boldsymbol{x}_S\in \pi^* (S)$.
Since $\pi^* (S)\subseteq \pi(S)$, we have $\boldsymbol{x}_S\in \pi(S)$.
It remains to prove the optimality of $\boldsymbol{x}_S$.
Lemma \ref{lemma:structure} implies that every component of $G[S]$ is either a star or a pisces.
Hence free riders in $S$ are the only possible edges incident to more than one vertex in $C^*_S$.
Since $x_{S,i}=0$ for any free rider $i$ incident to other edges in $S$,
it follows that
\begin{equation*}
\sum_{i\in S} x_{S,i}=\sum_{v\in C^*_S}\sum_{i\in \delta_S (v)}x_{S,i}=\lvert C^*_S\rvert=\gamma(S).
\end{equation*}
Therefore, $\boldsymbol{x}_S$ is an optimal solution of $DP(S)$.
\end{proof}

Now we strengthen Lemma \ref{thm:dualallocation} and present a dual-based description of PMAS-es with free riders.
\begin{theorem}
\label{thm:DualDescription_FreeRider}
Let $\Gamma_G=(N,\gamma)$ be the vertex cover game on a graph $G$ and $\boldsymbol{a}$ be a PMAS for $\Gamma_G$.
Then $\boldsymbol{a}_S \in \pi^* (S)$ for any $S\in 2^N\backslash \{\emptyset\}$.
\end{theorem}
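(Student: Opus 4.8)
The plan is to start from Lemma~\ref{thm:dualallocation}, which already gives $\boldsymbol{a}_S\in\pi(S)$ (equivalently, $\boldsymbol{a}_S$ is feasible --- indeed optimal --- for $DP(S)$), and then to verify the two tightenings that turn $\pi(S)$ into $\pi^*(S)$: that \eqref{ineq:vertex} is tight at every $v\in C^*_S$, and that $a_{S,i}=0$ for every free rider $i$ incident to other edges in $S$. Since $\boldsymbol{a}$ is a PMAS, $\mathcal{P}(\Gamma_G)\ne\emptyset$, so Theorem~\ref{thm:PM} and Lemma~\ref{lemma:structure} apply: every component of $G$ (and of each $G[S]$) is a star or a pisces, and $C^*$, $C^*_S$ are well defined, as are the notions of free rider and base; moreover $G$ is $K_3$-free, which will be used below.

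For the vertex tightening I would use a short counting argument. By efficiency, $\sum_{i\in S}a_{S,i}=\gamma(S)=\tau(G[S])=|C^*_S|$. Because $C^*_S$ is a vertex cover of $G[S]$, every $i\in S$ lies in $\delta_S(v)$ for at least one $v\in C^*_S$; combining this with $\boldsymbol{a}_S\ge 0$ (Lemma~\ref{thm:nonnegative}, via monotonicity of vertex cover games, Lemma~\ref{lemma:monotonicity}) and with $\sum_{i\in\delta_S(v)}a_{S,i}\le 1$ for every $v\in V_S$ (feasibility in $DP(S)$), one gets
\begin{equation*}
|C^*_S|=\sum_{i\in S}a_{S,i}\le\sum_{v\in C^*_S}\sum_{i\in\delta_S(v)}a_{S,i}\le|C^*_S|,
\end{equation*}
so all inequalities are equalities and $\sum_{i\in\delta_S(v)}a_{S,i}=1$ for every $v\in C^*_S$.

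The main obstacle is the second tightening, because it cannot be detected inside $G[S]$ alone --- one needs an auxiliary coalition built from edges of $G$ that may lie outside $S$. Let $i=uw$ be a free rider incident to another edge $j\in S$; by symmetry assume $j$ is incident to $u$. Then $j$ is a pendant edge $uu'$ of the pisces containing $i$, and since $w$ is a base of that pisces it also carries a pendant edge $e=ww'$ of $G$. Using $K_3$-freeness one checks that $u,w,u',w'$ are four distinct vertices, so $G[\{i,j,e\}]$ is a $P_4$ with middle edge $i$, whence $\gamma(\{i,j,e\})=2$, while $\gamma(\{i,j\})=\gamma(\{i,e\})=1$. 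Writing $T=\{i,j,e\}$ and applying efficiency and monotonicity of the PMAS,
\begin{equation*}
2=\gamma(T)=a_{T,i}+a_{T,j}+a_{T,e}\le a_{\{i,e\},i}+a_{\{i,j\},j}+a_{\{i,e\},e}=1+a_{\{i,j\},j}\le 2,
\end{equation*}
where the last step uses $a_{\{i,j\},j}\le\gamma(\{j\})=1$. Equality forces $a_{\{i,j\},j}=1$, hence $a_{\{i,j\},i}=\gamma(\{i,j\})-a_{\{i,j\},j}=0$; since $\{i,j\}\subseteq S$, monotonicity gives $a_{S,i}\le a_{\{i,j\},i}=0$, and nonnegativity (Lemma~\ref{thm:nonnegative}) gives $a_{S,i}=0$.

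Combining the three facts --- $\boldsymbol{a}_S\in\pi(S)$ from Lemma~\ref{thm:dualallocation}, tightness of \eqref{ineq:vertex} on $C^*_S$, and vanishing of the relevant free-rider coordinates --- yields $\boldsymbol{a}_S\in\pi^*(S)$, as claimed. The only places that require care are the pisces bookkeeping that produces the pendant edge $e$ and certifies that $\{i,j,e\}$ induces a $P_4$ (which is exactly where $K_3$-freeness enters), and keeping the directions of the monotonicity inequalities straight in the displayed chain.
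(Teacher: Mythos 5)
Your proof is correct, but it is organized differently from the paper's. The paper reduces to connected $G[S]$ and runs a three-way case analysis (star without a free rider, pisces, star with a free rider), getting tightness at the vertices of $C^*_S$ and the vanishing of the free-rider coordinate by efficiency/monotonicity comparisons with the coalitions $\delta_S(v^*_k)$, and, in the star-with-free-rider case, by enlarging $S$ to a pisces coalition $T$ and invoking the pisces case. You instead prove the two families of equalities uniformly: tightness of \eqref{ineq:vertex} at every $v\in C^*_S$ follows in one stroke from the sandwich $\lvert C^*_S\rvert=\sum_{i\in S}a_{S,i}\leq\sum_{v\in C^*_S}\sum_{i\in\delta_S(v)}a_{S,i}\leq\lvert C^*_S\rvert$ (essentially complementary slackness, using the feasibility and nonnegativity already supplied by Lemma \ref{thm:dualallocation} and Lemma \ref{thm:nonnegative}), and the free-rider coordinate is killed by a local computation on the fixed three-edge coalition $\{i,j,e\}$ inducing a $P_4$, where $e$ is a pendant edge of $G$ at the other base and may lie outside $S$ --- the same ``leave $S$'' trick the paper uses in its third case, but applied once and locally, in the style of the forbidden-subgraph computations of Lemma \ref{thm:onlyif}. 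Your bookkeeping checks out: the pisces structure guarantees the pendant edge $e$, the four endpoints are distinct (this is where $K_3$-freeness, or simply the tree structure of the component, is used), $\gamma(\{i,j,e\})=2$ while $\gamma(\{i,e\})=\gamma(\{i,j\})=\gamma(\{j\})=1$, and the monotonicity inequalities are applied in the right direction to force $a_{\{i,j\},i}=0$ and hence $a_{S,i}=0$. What your route buys is brevity and the absence of any case analysis or reduction to connected components; what the paper's route buys is an explicit description of how the allocation behaves on whole star and pisces coalitions, which is the form reused later in the proof of Theorem \ref{thm:PMAS2SM}.
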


\begin{proof}
Let $\boldsymbol{a}$ be a PMAS for $\Gamma_G$.
Theorem \ref{thm:PM} and Lemma \ref{lemma:structure} imply that every component of $G$ is either a star or a pisces.
Let $C^*$ be the set of centers of stars and bases of pisceses in $G$.
Let $S\in 2^N\backslash \{\emptyset\}$ and $C^*_S\subseteq C^*$ be a minimum vertex cover of $G[S]$.
Lemma \ref{thm:dualallocation} implies that $\boldsymbol{a}_S$ is an optimal solution of $DP(S)$.
Notice that any minimum vertex cover for a graph is a union of minimum vertex covers for every component.
Hence we assume that $G[S]$ is connected.
We show that $\boldsymbol{a}_S$ satisfies all equality constraints in $\pi^* (S)$ by distinguishing three cases.

Case $1$: $G[S]$ is a star without free riders.
Let $v^*$ be the center of $G[S]$.
Notice that $\delta_S (v^*)=S$ and $C^*_S=\{v^*\}$.
By efficiency, we have
\begin{equation*}
  \sum_{i\in \delta_S (v^*)} a_{S,i}=\sum_{i\in S} a_{S,i}=\gamma(S)=1.
\end{equation*}

Case $2$: $G[S]$ is a pisces.
Let $i^*$ be the free rider in $S$ and $v^*_1, v^*_2$ be the endpoints of $i^*$.
Hence $C^*_S=\{v^*_1,v^*_2\}$, $\delta_S(v^*_1)\cup \delta_S(v^*_2)=S$ and $\delta_S(v^*_1)\cap \delta_S(v^*_2)=\{i^*\}$.
By efficiency and monotonicity, we have  
\begin{equation*}
   \begin{split}
      2 = \gamma(S)
      & =\sum_{i\in S} a_{S,i}\\
      & \leq \sum_{i\in \delta_S(v^*_1)} a_{\delta_S(v^*_1),i}+\sum_{i\in \delta_S(v^*_2)} a_{\delta_S(v^*_2),i}-a_{\delta_S(v^*_k),i^*}\\
     & = \gamma (\delta_S(v^*_1))+\gamma (\delta_S(v^*_2)) -a_{\delta_S(v^*_k),i^*}\\
     & = 2-a_{\delta_S(v^*_k),i^*}
   \end{split}
\end{equation*}
for $k=1,2$.
By monotonicity, we have
\begin{equation*}
a_{S,i^*}=a_{\delta_S(v^*_k),i^*}=0
\end{equation*}
and hence
\begin{equation*}
\sum_{i\in \delta_S(v^*_k)} a_{S,i}=\sum_{i\in \delta_S(v^*_k)} a_{\delta_S(v^*_k),i}=1
\end{equation*}
for $k=1,2$.

Case $3$: $G[S]$ is a star with a free rider.
Let $v^*_1$ be the center of $G[S]$.
Notice that $\delta_S (v^*_1)=S$ and $C^*_S=\{v^*_1\}$.
Since $G[S]$ contains a free rider, there exists $T\subseteq N$ such that $G[T]$ is a pisces and $\delta_S (v^*_1)=\delta_T (v^*_1)$.
Let $i^*$ be the free rider in $S\subsetneq T$ and $v^*_1, v^*_2$ be the endpoints of $i^*$.
Let $C^*_T=\{v^*_1,v^*_2\}$.
Clearly, $C^*_T\subseteq C^*$ is a minimum vertex cover of $G[T]$.
By monotonicity, we have
\begin{equation*}
\sum_{i\in \delta_S (v^*_1)} a_{S,i}=\sum_{i\in \delta_T (v^*_1)} a_{T,i}=1
\end{equation*}
and hence
\begin{equation*}
a_{S,i^*}=a_{T,i^*}=0.
\end{equation*}
\end{proof}

\section{Integral PMAS-es and stable matchings}
\label{sec:integralPMAS}

In this section,
we first show that every integral PMAS for a vertex cover game $\Gamma_G$ is an extreme point of $\mathcal{P}(\Gamma_G)$,
then use stable matchings to characterize integral PMAS-es for $\Gamma_G$,
and finally conclude that integral PMAS-es for $\Gamma_G$ can be enumerated by employing Gale-Shapley algorithm.

\begin{theorem}
\label{thm:integralPMAS}
  Let $\Gamma_G=(N,\gamma)$ be the vertex cover game on a graph $G$ with $\mathcal{P}(\Gamma_G)\not=\emptyset$.
  Then every integral PMAS for $\Gamma_G$ is an extreme point of $\mathcal{P}(\Gamma_G)$.
\end{theorem}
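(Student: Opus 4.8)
The plan is to exploit two elementary facts about PMAS-es of vertex cover games: the entries of \emph{any} PMAS lie in $[0,1]$, and the entries of an \emph{integral} PMAS therefore lie in $\{0,1\}$. Once these are in hand, the extreme-point property drops out of a one-line coordinatewise argument, with no appeal to the machinery of polyhedral combinatorics (vertices as solutions of full-rank systems of tight constraints), although that route is also available.

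First I would pin down the range of entries. Since $\Gamma_G$ is monotonic by Lemma \ref{lemma:monotonicity}, Lemma \ref{thm:nonnegative} gives $a_{S,i}\ge 0$ for every nonempty $S$ and every $i\in S$, and the same holds for any PMAS. For the upper bound, fix $S$ and $i\in S$ and let $v$ be an endpoint of the edge $i$. Then $\delta_S(v)\subseteq S$, $i\in\delta_S(v)$, and $G[\delta_S(v)]$ is a star centered at $v$, so $\gamma(\delta_S(v))=\tau(G[\delta_S(v)])=1$. Applying monotonicity and then efficiency of a PMAS $\boldsymbol b$ at the coalition $\delta_S(v)$ gives $b_{S,i}\le b_{\delta_S(v),i}\le\sum_{j\in\delta_S(v)}b_{\delta_S(v),j}=\gamma(\delta_S(v))=1$. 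This is exactly the feasibility computation already carried out in the proof of Lemma \ref{thm:dualallocation}. In particular $0\le a_{S,i}\le 1$ for an integral PMAS $\boldsymbol a$, and integrality forces $a_{S,i}\in\{0,1\}$.

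Next I would prove extremality. It suffices, by the standard reduction for convex sets, to show that if $\boldsymbol a=\tfrac12(\boldsymbol b+\boldsymbol c)$ with $\boldsymbol b,\boldsymbol c\in\mathcal P(\Gamma_G)$, then $\boldsymbol b=\boldsymbol c=\boldsymbol a$. Argue coordinate by coordinate. If $a_{S,i}=0$, then $b_{S,i}+c_{S,i}=0$ while $b_{S,i},c_{S,i}\ge 0$, forcing $b_{S,i}=c_{S,i}=0$. If $a_{S,i}=1$, then $b_{S,i}+c_{S,i}=2$ while $b_{S,i}\le 1$ and $c_{S,i}\le 1$ by the bound from the previous paragraph, forcing $b_{S,i}=c_{S,i}=1$. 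In every coordinate $b_{S,i}=c_{S,i}=a_{S,i}$, so $\boldsymbol b=\boldsymbol c=\boldsymbol a$, and $\boldsymbol a$ is an extreme point of $\mathcal P(\Gamma_G)$.

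I do not expect a genuine obstacle here; the only point that must be identified correctly is the uniform bound $a_{S,i}\le 1$ valid for \emph{every} PMAS, obtained by testing monotonicity and efficiency against the single-vertex-cover coalition $\delta_S(v)$. (One could instead invoke Theorem \ref{thm:DualDescription_FreeRider} to note that $\boldsymbol a_S$ is a $0/1$ vector — indeed the incidence vector of a maximum matching of $G[S]$ — but this is more than is needed.) As an aside, since each coordinate lies in $[0,1]$, $\mathcal P(\Gamma_G)$ is bounded and hence a polytope, so ``extreme point'' here coincides with ``vertex''.
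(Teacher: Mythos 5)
Your proof is correct and follows essentially the same route as the paper: both reduce to the uniform bound $0\le a_{S,i}\le 1$ for every PMAS and then conclude by the coordinatewise midpoint argument for integral vectors. The only cosmetic difference is that you re-derive the bound directly from monotonicity and efficiency at the coalition $\delta_S(v)$, whereas the paper obtains it by citing Lemma \ref{thm:dualallocation} (whose feasibility step is exactly that computation, as you note).
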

\begin{proof}
Let $\boldsymbol{a}$ be an integral PMAS for $\Gamma_G$.
Suppose $\boldsymbol{a}=\frac{1}{2} \boldsymbol{b}+\frac{1}{2} \boldsymbol{c}$, where $\boldsymbol{b}, \boldsymbol{c}\in \mathcal{P}(\Gamma_G)$.
Let $S\in 2^N\backslash \{\emptyset\}$.
It follows that $\boldsymbol{a}_S=\frac{1}{2}\boldsymbol{b}_S+\frac{1}{2}\boldsymbol{c}_S$.
By Lemma \ref{thm:dualallocation}, $\boldsymbol{a}_S$, $\boldsymbol{b}_S$ and $\boldsymbol{c}_S$ are all optimal solutions of $DP(S)$, implying that $0\leq a_{S,i}\leq 1$, $0\leq b_{S,i}\leq 1$, and $0\leq c_{S,i}\leq 1$ for any $i\in S$.
Since $\boldsymbol{a}_S$ is integral, we have $a_{S,i}=b_{S,i}=c_{S,i}$ for any $i\in S$.
Therefore, $\boldsymbol{a}$ is an extreme point of $\mathcal{P}(\Gamma_G)$.
\end{proof}

Theorem \ref{thm:integralPMAS} states that every integral PMAS for a vertex cover game $\Gamma_G$ is an extreme point of $\mathcal{P}(\Gamma_G)$.
Unfortunately, not all extreme points of $\mathcal{P}(\Gamma_G)$ are integral even when $G$ is a star.
Consider the vertex cover game $\Gamma_{K_{1,n}}=(N,\gamma)$ where $n\geq 4$.
Notice that $\gamma(S)=1$ for any $S\in 2^N\backslash \{\emptyset\}$.
Thus $\Gamma_{K_{1,n}}$ falls into the scope of \emph{unit games} investigated by Hamers et al. \cite{HameMiqu14},
where they showed that $\mathcal{P}(\Gamma_{K_{1,n}})$ has more than $(n-2)\cdot n!$ non-integral extreme points.
Hence for $\mathcal{P}(\Gamma_G)$, instead of all extreme points, we focus on integral extreme points.

Before proceeding, we introduce the notion of preference systems and stable matchings.
Let $G=(V,E)$ be a graph.
For any $v\in V$, let $\prec_v$ be a strict linear order on edges in $\delta(v)$.
We call $\prec_v$ the \emph{preference} of $v$.
For any $i,j\in \delta(v)$, we say that $i$ \emph{dominates} $j$ (at $v$) if $i\prec_v j$.
We use $\prec$ to denote the set of preferences $\prec_v$ for any $v\in V$,
and call the ordered pair $(G,\prec)$ a \emph{preference system}.
In particular, $(G,\prec)$ is \emph{bipartite} if $G$ is bipartite.
A \emph{stable matching} in $(G,\prec)$ is a matching $M$ of $G$ such that every edge in $E\backslash M$ is dominated by an edge in $M$.
For any $S\subseteq E$, we use $\prec_S$ to denote the restriction of $\prec$ to $S$,
and hence $(G[S],\prec_S)$ is also a preference system.
Gale and Shapley \cite{GaleShap62} proved that every bipartite preference system admits a stable matching by providing an efficient algorithm, namely Gale-Shapley algorithm, for computing stable matchings.
Now we are ready to characterize integral PMAS-es for vertex cover games with stable matchings.

\begin{theorem}
\label{thm:PMAS2SM}
Let $\Gamma_G=(N,\gamma)$ be the vertex cover game on a graph $G$ with $\mathcal{P}(\Gamma_G)\not=\emptyset$.
Then the following statements are equivalent.
\begin{enumerate}[label={\emph{($\roman*$)}}]
	\item $\boldsymbol{a}$ is an integral PMAS for $\Gamma_G$.
	\item There is a preference system $(G,\prec)$, where every free rider has the lowest rank, such that $\boldsymbol{a}_S$ is the incidence vector of a stable matching in $(G[S],\prec_S)$ for any  $S\in 2^N \backslash \{\emptyset\}$.
\end{enumerate}
\end{theorem}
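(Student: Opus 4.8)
The plan is to deduce both implications from a single structural fact about stable matchings in disjoint unions of stars and pisceses. Since $\mathcal{P}(\Gamma_G)\neq\emptyset$, Theorem \ref{thm:PM} gives that $G$, and hence every edge-induced subgraph $G[S]$, is $(K_3,C_4,P_5)$-free, so by Lemma \ref{lemma:structure} every component of $G[S]$ is a star or a pisces; let $C^*$ and $C^*_S\subseteq C^*$ be as introduced before Theorem \ref{thm:PM}. The fact I would prove first is: \emph{if $(G,\prec)$ is a preference system in which every free rider is ranked lowest, then for every nonempty $S\subseteq E$ the preference system $(G[S],\prec_S)$ has a unique stable matching, namely $M_S:=\{\min_{\prec_v}\delta_S(v)\colon v\in C^*_S\}$, a set of $\lvert C^*_S\rvert$ distinct edges none of which is a free rider incident to another edge of $S$; in particular $\lvert M_S\rvert=\lvert C^*_S\rvert=\tau(G[S])=\gamma(S)$.} The proof is a short case check per component. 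In a star with center $c$ the empty matching is unstable ($\min_{\prec_c}\delta_S(c)$ is dominated nowhere) and $\nu=1$, so the stable matching is $\{\min_{\prec_c}\delta_S(c)\}$. In a pisces with bases $c_1,c_2$ and free rider $f$, no stable matching contains $f$ (otherwise $f$ is the whole matching and a pendant edge at $c_1$, which outranks $f$ there, is dominated nowhere), and each base must be matched (else its $\prec$-favorite edge, a pendant edge because $f$ is lowest, is dominated nowhere), so the stable matching is the pair of favorite pendant edges at $c_1$ and $c_2$.

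For $(ii)\Rightarrow(i)$, integrality of $\boldsymbol{a}$ is immediate and efficiency is $\sum_{i\in S}a_{S,i}=\lvert M_S\rvert=\gamma(S)$ by the structural fact. For monotonicity, let $S\subseteq T$ and $i\in M_T$ with $i\in S$. If $i$ is a free rider then, by the structural fact, it is not incident to another edge of $T$, hence is an isolated edge of $G[T]$ and of $G[S]$, so $i\in M_S$. Otherwise $i$ has a unique endpoint $v\in C^*$, its other endpoint having degree one in $G$ (the only edge joining two $C^*$-vertices is a free rider). The structural fact for $T$ gives $i=\min_{\prec_v}\delta_T(v)$, and since $i\in\delta_S(v)\subseteq\delta_T(v)$ this forces $i=\min_{\prec_v}\delta_S(v)$. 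Because $\tau(G[S]-v)=\tau(G[S])-1$, one may choose $C^*_S$ to contain $v$, and then $i\in M_S$ by the structural fact. Hence $a_{S,i}=1\geq a_{T,i}$.

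For $(i)\Rightarrow(ii)$, let $\boldsymbol{a}$ be an integral PMAS and put $M_S=\{i\in S\colon a_{S,i}=1\}$. By Lemma \ref{thm:dualallocation}, $\boldsymbol{a}_S$ is an integral optimal solution of $DP(S)$, so $M_S$ is a matching of $G[S]$ with $\lvert M_S\rvert=\gamma(S)=\nu(G[S])$, i.e.\ a maximum matching. For a vertex $v$ and nonempty $T\subseteq\delta(v)$ the graph $G[T]$ is a star, so $\nu(G[T])=1$ and $M_T$ is a single edge $\sigma_v(T)$; monotonicity of $\boldsymbol{a}$ gives the Chernoff property ($T'\subseteq T$ and $\sigma_v(T)\in T'$ imply $a_{T',\sigma_v(T)}\geq a_{T,\sigma_v(T)}=1$, hence $\sigma_v(T')=\sigma_v(T)$), so $\sigma_v=\min_{\prec_v}$ for a unique strict linear order $\prec_v$ on $\delta(v)$. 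If $f$ is a free rider at $v$ and $g\in\delta(v)\setminus\{f\}$, then $g$ is incident to $f$ inside the coalition $\{f,g\}$, so Theorem \ref{thm:DualDescription_FreeRider} yields $a_{\{f,g\},f}=0$, whence $\sigma_v(\{f,g\})=g$ and $g\prec_v f$; thus every free rider is ranked lowest in $(G,\prec)$. Finally, $M_S$ is stable in $(G[S],\prec_S)$: for $j\in S\setminus M_S$, some endpoint $c$ of $j$ is a base or center of $j$'s component of $G[S]$, and the edge $e\in M_S$ incident to $c$ (which exists because $M_S$ is a maximum matching) satisfies $a_{\delta_S(c),e}\geq a_{S,e}=1$, so $e=\sigma_c(\delta_S(c))=\min_{\prec_c}\delta_S(c)$, whence $e\prec_c j$; so $j$ is dominated by $e\in M_S$.

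The step I expect to be the main obstacle is the monotonicity argument in $(ii)\Rightarrow(i)$: one must check that the $C^*$-endpoint $v$ along which $i$ is matched in $G[T]$ still belongs to some minimum vertex cover $C^*_S\subseteq C^*$ of the smaller graph $G[S]$, so that the structural fact continues to force $i\in M_S$; this works because the non-$C^*$ endpoint of any non-free-rider edge has degree one in $G$. The other delicate point is proving the structural fact cleanly, in particular the bookkeeping for free riders and for $K_2$-components, whose ``center'' is determined only through the requirement $C^*_S\subseteq C^*$.
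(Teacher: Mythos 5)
Your proposal is correct and follows essentially the same route as the paper: both directions rest on Lemma \ref{thm:dualallocation}, Theorem \ref{thm:DualDescription_FreeRider}, and the fact that in a $(K_3,C_4,P_5)$-free graph with free riders ranked lowest each $(G[S],\prec_S)$ has a unique stable matching, which is a maximum matching of size $\tau(G[S])$. The only differences are presentational: you isolate this as an explicit structural lemma (with the formula $M_S=\{\min_{\prec_v}\delta_S(v):v\in C^*_S\}$), justify the step ``$i\in M_T\Rightarrow i\in M_S$'' that the paper states without proof, and rationalize the choice function $\sigma_v$ via the Chernoff property instead of the paper's greedy peeling construction of $\prec_v$ — all equivalent in substance.
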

\begin{proof}
Theorem \ref{thm:PM} and Lemma \ref{lemma:structure} imply that every component of $G$ is either a star or a pisces.

$(i)\Rightarrow (ii)$.
Let $\boldsymbol{a}$ be an integral PMAS for $\Gamma_G$.
We first define a preference system $(G,\prec)$ from $\boldsymbol{a}$.
Let $C^*$ be the set of centers of stars and bases of pisceses in $G$.
To define a preference system $(G,\prec)$, it suffices to define a preference $\prec_v$ for any $v\in C^*$.
Let $v^*$ be a vertex in $C^*$.
For any nonempty set $S\subseteq \delta(v^*)$, we have $\sum_{i\in S} a_{S,i}=\gamma(S)=1$.
We define a preference $\prec_{v^*}$ for $v^*$ from $\boldsymbol{a}$ as follows.
Start with $S=\delta(v^*)$.
  Let $i^*\in S$ be the edge with $a_{S,i^*}=1$.
  Define partial orders of $\prec_{v^*}$ by
  \begin{equation*}
    i^* \prec_{v^*} j, ~\forall\; j\in S\backslash \{i^*\}.
  \end{equation*}
  Update $S$ with $S\backslash \{i^*\}$ and repeat the process above until $S=\emptyset$.
  We claim that when $S$ becomes empty, $\prec_{v^*}$ is a well-defined strict linear order on $\delta(v^*)$.
  Indeed, consider any two edges $i,j$ in $\delta(v^*)$.
  Clearly, $G[\{i,j\}]$ is a star.
  Since $a_{\{i,j\},i}+a_{\{i,j\},j}=\gamma (\{i,j\})=1$, we may assume that $a_{\{i,j\},i}=1$ and $a_{\{i,j\},j}=0$.
  For any $S\subseteq N$ with $\{i,j\}\subseteq S$, the monotonicity implies that $a_{S,j}=a_{\{i,j\},j}=0$.
  Hence $i\prec_{v^*} j$ always holds.
  Thus $\boldsymbol{a}$ determines a unique preference system $(G,\prec)$.
  
  Now we show that every free rider in $(G,\prec)$ has the lowest rank in any preference.
  Let $i^*$ be a free rider in $G$ and $v^*_1, v^*_2$ be the endpoints of $i^*$.
  Clearly, $v^*_1,v^*_2\in C^*$.
  By Theorem \ref{thm:DualDescription_FreeRider}, $a_{S,i^*}=0$ for any $S$ containing other edges incident to $i^*$.
  Thus $i^*$ has the lowest rank in both $\prec_{v^*_1}$ and $\prec_{v^*_2}$.
  
  Therefore, $\boldsymbol{a}$ defines a unique preference system $(G,\prec)$ where every free rider has the lowest rank.
  Since every component of $G$ is either a star or a pisces, it is easy to see that $(G[S],\prec_S)$ has a unique stable matching $M_S$ with incidence vector $\boldsymbol{a}_S$ for any $S\in 2^N\backslash \{\emptyset\}$.

$(ii)\Rightarrow (i)$.
  Let $(G,\prec)$ be a preference system where every free rider has the lowest rank.
  Since every component of $G$ is either a star or a pisces, $(G[S],\prec_S)$ has a unique stable matching $M_S$ for any $S\in 2^N\backslash \{\emptyset\}$.
  We show that $\boldsymbol{a}=(a_{S,i})_{S\in 2^N\backslash \{\emptyset\},i\in S}$ is an integral PMAS for $\Gamma_G$, where $\boldsymbol{a}_S=(a_{S,i})_{i\in S}$ is the incidence vector of $M_S$.
  
  We first check efficiency.
  Let $S\in 2^N\backslash \{\emptyset\}$ and $M_S$ be the unique stable matching of $(G[S],\prec_S)$.
  Since every free rider has the lowest rank in any preference, $M_S$ is also a maximum matching of $G[S]$.
  Then we have
  \begin{equation*}
    \sum_{i\in S}a_{S,i}=\lvert M_S\rvert=\nu(G[S])=\tau(G[S])=\gamma(S).
  \end{equation*}
  Hence the efficiency follows.

  We now prove monotonicity.
  Let $S,T\in 2^N\backslash \{\emptyset\}$ with $S\subseteq T$.
  Let $M_S$ and $M_T$ be the unique stable matching of $(G[S],\prec_S)$ and $(G[T],\prec_T)$, respectively.
  Let $i\in S$.
  If $i\in M_T$, then we have $i\in M_S$, implying that
  \begin{equation*}
    a_{S,i}=a_{T,i}=1.
  \end{equation*}
 Otherwise, we have
  \begin{equation*}
    a_{S,i}\geq a_{T,i}=0.
  \end{equation*}
In either case we have $a_{S,i}\geq a_{T,i}$.
Hence the monotonicity follows.
\end{proof}

Theorem \ref{thm:PMAS2SM} reveals the principle behind integral PMAS-es for vertex cover games:
the new player always takes the full cost of the vertex covering it in a minimum vertex cover.
More specifically, consider the following coalition growing process.
The coalition starts with an empty set.
A new player joins the coalition by deciding whether or not to connect to a vertex in the current coalition.
But every new player is only allowed to connect to a vertex of an existing minimum vertex cover or a pendant vertex of an existing star in the current coalition.
However, the new player always has to pay for the full cost of the vertex covering it, no matter whether or not the vertex belongs to a minimum vertex cover for the coalition before the new player joins.
Theorem \ref{thm:PMAS2SM} also suggests that for vertex cover games,
there is a one-to-one correspondence between integral PMAS-es and preference systems such that every free rider has the lowest rank.
Since such preference system defined on a $(K_3,C_4,P_5)$-free graph has a unique stable matching, we have the following corollary.

\begin{corollary}
Integral PMAS-es for vertex cover games can be enumerated by employing Gale-Shapely algorithm.
\end{corollary}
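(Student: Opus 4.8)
The plan is to show that the enumeration of integral PMAS-es for a vertex cover game $\Gamma_G$ reduces, via Theorem \ref{thm:PMAS2SM}, to the enumeration of preference systems $(G,\prec)$ in which every free rider occupies the lowest rank in each of the preferences at its endpoints. First I would observe that, by Theorem \ref{thm:PM}, if $\mathcal{P}(\Gamma_G)\neq\emptyset$ then $G$ is $(K_3,C_4,P_5)$-free, so by Lemma \ref{lemma:structure} each component of $G$ is a star or a pisces; otherwise $\mathcal{P}(\Gamma_G)=\emptyset$ and there is nothing to enumerate. Thus the whole problem decomposes componentwise, and it suffices to describe the enumeration on a single star or pisces and then take products over components.

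Next I would invoke Theorem \ref{thm:PMAS2SM} to set up the correspondence: an integral PMAS $\boldsymbol{a}$ is exactly the data of a preference system $(G,\prec)$ with every free rider ranked last, together with, for each nonempty $S\subseteq N$, the incidence vector $\boldsymbol{a}_S$ of the unique stable matching $M_S$ of $(G[S],\prec_S)$. The map from such preference systems to integral PMAS-es is well-defined (the $(ii)\Rightarrow(i)$ direction of Theorem \ref{thm:PMAS2SM}) and injective — distinct preference systems give distinct PMAS-es since $\boldsymbol{a}_{\{i,j\}}$ recovers the relative order of $i$ and $j$ at their common endpoint — and surjective by the $(i)\Rightarrow(ii)$ direction. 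Hence enumerating integral PMAS-es amounts to: (1) enumerate all admissible preference systems $(G,\prec)$, and (2) for each one and each coalition $S$, compute the unique stable matching $M_S$ of $(G[S],\prec_S)$ by running Gale-Shapley (which terminates and outputs this stable matching because $(G[S],\prec_S)$ is bipartite, indeed a disjoint union of stars and pisceses).

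For step (1) I would note that on a star with center $c$ and leaf-edges $\delta(c)$, an admissible preference is simply a linear order on $\delta(c)$ with the free rider (if the star sits inside a pisces, i.e. the edge toward the other base) placed last, so there are at most $(\deg(c)-1)!$ choices; on a pisces there is one such order at each of the two bases, again with the shared free rider last. Enumerating all combinations across the (at most $|V|$) relevant vertices is a straightforward product, and for each resulting preference system the associated PMAS is obtained by the Gale-Shapley computations in step (2). The main obstacle is essentially bookkeeping rather than mathematics: one must argue that this enumeration is exhaustive and non-redundant, which is precisely the bijection supplied by Theorem \ref{thm:PMAS2SM} together with the uniqueness of the stable matching on $(K_3,C_4,P_5)$-free graphs; once that is in hand, Gale-Shapley does the rest and the corollary follows.
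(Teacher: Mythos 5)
Your proposal is correct and follows essentially the same route as the paper: the corollary is drawn directly from the bijection in Theorem \ref{thm:PMAS2SM} between integral PMAS-es and preference systems with free riders ranked last, combined with the uniqueness of the stable matching on each $(G[S],\prec_S)$, which Gale--Shapley computes. Your additional bookkeeping (componentwise decomposition, the $(\deg(c)-1)!$ count, and the injectivity check via $\boldsymbol{a}_{\{i,j\}}$) is sound and simply makes explicit what the paper leaves implicit.
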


\section{Concluding remarks}
\label{sec:ending}
This paper investigates PMAS-es for vertex cover games.
We show that the population monotonicity can be determined in polynomial time and a PMAS, if exists, can be constructed accordingly.
We also show that integral PMAS-es can be characterized with stable matchings and be enumerated by employing Gale-Shapley algorithm.
However, neither computing a PMAS nor determining whether a given vector is a PMAS can be done efficiently, as both problems have exponential size.
Nevertheless, integral allocations for each coalition in a PMAS can be computed efficiently by Gale-Shapley algorithm.

\section*{Acknowledgments}

We would like to thank Xin Chen for the valuable help at the early stage of this paper.
We would also like to thank Bo Li for bringing up the notion of free riders and thank Dachuan Xu and Donglei Du for their helpful discussion, which greatly improved the presentation of this paper.

\bibliographystyle{abbrv}
\bibliography{reference}
\end{document}